\providecommand{\tabularnewline}{\\}
\newcommand\lyxeptcstitlerunning[2]{\title{#2}\ifstrempty{#1}{}{\def\titlerunning{#1}}}
\newcommand\lyxeptcsauthorrunning[2]{\author{#2}\ifstrempty{#1}{}{\def\authorrunning{#1}}}
\newcommand\lyxeptcsevent[1]{\providecommand{\event}{#1}}
\newlength{\lyxlabelwidth}      
\theoremstyle{plain}
\newtheorem{thm}{\protect\theoremname}
\theoremstyle{definition}
\newtheorem{defn}[thm]{\protect\definitionname}
 \theoremstyle{remark}
 \newtheorem*{interpretation*}{\protect\interpretationname}
\theoremstyle{plain}
\newtheorem{lem}[thm]{\protect\lemmaname}
\theoremstyle{remark}
\newtheorem{rem}[thm]{\protect\remarkname}
\theoremstyle{plain}
\newtheorem{prop}[thm]{\protect\propositionname}
\theoremstyle{plain}
\newtheorem{cor}[thm]{\protect\corollaryname}
\newenvironment{proof}[1][\protect\proofname]{\par
\normalfont\topsep6\p@\@plus6\p@\relax
\trivlist
\itemindent\parindent
\item[\hskip\labelsep
\scshape
#1]\ignorespaces
}{%
\endtrivlist\@endpefalse
}
\providecommand{\proofname}{Proof}
\newcounter{casectr}
\newenvironment{caseenv}
{\begin{list}{{\itshape\ \protect\casename} \arabic{casectr}.}{%
\setlength{\leftmargin}{\labelwidth}
\addtolength{\leftmargin}{\parskip}
\setlength{\itemindent}{\listparindent}
\setlength{\itemsep}{\medskipamount}
\setlength{\topsep}{\itemsep}}
\setcounter{casectr}{0}
\usecounter{casectr}}
{\end{list}}
\newenvironment{elabeling}[2][]%
{\settowidth{\lyxlabelwidth}{#2}
\begin{description}[font=\normalfont,style=sameline,
leftmargin=\lyxlabelwidth,#1]}
{\end{description}}
\newcommand\pfun{\mathrel{\ooalign{\hfil\ensuremath{\mapstochar\mkern5mu}\hfil\cr\ensuremath{\to}\cr}}}
\newcommand\llparen\llparenthesis
\newcommand\rrparen\rrparenthesis
\newcommand\llbr{\llbracket}
\newcommand\rrbr{\rrbracket}
\def\twoheadleftrightarrowfill{\arrowfill@\twoheadleftarrow\relbar\twoheadrightarrow}
\newcommand{\undertwoheadleftrightarrow}{\mathpalette{\underarrow@\twoheadleftrightarrowfill}}
 \providecommand{\interpretationname}{Interpretation}
\providecommand{\casename}{Case}
\providecommand{\corollaryname}{Corollary}
\providecommand{\definitionname}{Definition}
\providecommand{\lemmaname}{Lemma}
\providecommand{\propositionname}{Proposition}
\providecommand{\remarkname}{Remark}
\providecommand{\theoremname}{Theorem}
\begin{document}

\lyxeptcstitlerunning{Continuation calculus}{Continuation calculus}

\lyxeptcsauthorrunning{Bram Geron, Herman Geuvers}{Bram Geron\institute{Eindhoven University of Technology}\email{bgeron@gmail.com}
\and  Herman Geuvers\institute{Radboud University Nijmegen\\
Eindhoven University of Technology}\email{herman@cs.ru.nl}}

\lyxeptcsevent{COS 2013}
\maketitle
\begin{abstract}
Programs with control are usually modeled using lambda calculus extended
with control operators. Instead of modifying lambda calculus, we consider
a different model of computation. We introduce continuation calculus,
or CC, a deterministic model of computation that is evaluated using
only head reduction, and argue that it is suitable for modeling programs
with control. It is demonstrated how to define programs, specify them,
and prove them correct. This is shown in detail by presenting in CC
a list multiplication program that prematurely returns when it encounters
a zero. The correctness proof includes termination of the program.

In continuation calculus we can model both call-by-name and call-by-value.
In addition, call-by-name functions can be applied to call-by-value
results, and conversely.

\end{abstract}
\ifdef{\lyxeptcstitlerunning}{}{

\title{Continuation calculus\\
(omgezet naar plain \emph{article})}

\author{Bram Geron \and  Herman Geuvers}
\maketitle
\begin{abstract}
De abstract staat in ccalc-master.lyx.
\end{abstract}
}

\newcommand{\fieldampersand}{&}\global\long\def\twoheadleftrightarrow{\;\mbox{\ensuremath{\twoheadleftarrow}\!\!\!\!\ensuremath{\twoheadrightarrow}}\;}
\global\def \rawprotect{\protect}

\global\long\def\d#1#2{\mathit{#1}\longrightarrow\mathit{#2}}
\global\long\def\dsplit#1#2{\mathit{#1}\fieldampersand\longrightarrow\mathit{#2}}

\global\long\def\Names{\mathcal{N}}
\global\long\def\Vars{\mathcal{V}}
\global\long\def\Univ{\mathcal{U}}
\global\long\def\LHS{\mathsf{LHS}}
\global\long\def\RHS{\mathsf{RHS}}
\global\long\def\Rules{\mathsf{Rules}}
\global\long\def\Programs{\mathsf{Programs}}

\global\long\def\Bools{\mathbb{B}}
\global\long\def\Naturals{\mathbb{N}}
\global\long\def\ListNats{\mathconst{List}_{\mathbb{N}}}

\global\long\def\head#1{\mathsf{head}(#1)}
\global\long\def\plainnext{\mathsf{next}}
\global\long\def\next#1#2{\plainnext_{#1}(#2)}
\global\long\def\arity#1#2{\mathsf{arity}_{#1}(#2)}
\global\long\def\length#1{\mathsf{length}(#1)}

\ifdef{\lyxeptcstitlerunning}{\global\long\def\llangle{\langle\!\!\langle}
\global\long\def\rrangle{\rangle\!\!\rangle}
}{\global\long\def\llangle{\langle\negmedspace\langle}
\global\long\def\rrangle{\rangle\negmedspace\rangle}
}

\global\long\def\sembedraw#1#2{\langle#1\rangle_{#2}}
\global\long\def\embedsraw#1#2{\llangle#1\rrangle_{#2}}

\global\long\def\sembed#1{\langle#1\rangle}
\global\long\def\sembedm#1{\sembed{\mathsf{#1}}}
\global\long\def\embeds#1{\llangle#1\rrangle}
\global\long\def\embedsm#1{\embeds{\mathsf{#1}}}

\global\long\def\embed#1{XXX#1XXX}

\global\long\def\eqvterm#1{\underline{\:#1\rawprotect\vphantom{\embed{}}\:}}
\global\long\def\convergingwith#1{\{#1\}}

\global\long\def\interpraw#1#2{\llbr#1\rrbr_{#2}}
\global\long\def\Ninterp#1{\interpraw{#1}{\Naturals}}
\global\long\def\interp#1{\interpraw{#1}{}}
\global\long\def\cps#1{\llbr#1\rrbr}
\global\long\def\denot#1{\llbr#1\rrbr}
\global\long\def\straightembed#1#2{\mathsf{embed}_{#2}\left(#1\right)}
\global\long\def\powerset#1{\mathcal{P}(#1)}

\global\long\def\fin{\downarrow}
\global\long\def\evfinal{{\;\twoheadrightarrow\downarrow}}
\global\long\def\obseq{\approx}
\global\long\def\reaches{\twoheadrightarrow}
\global\long\def\backreaches{\twoheadleftarrow}
\global\long\def\otherconv#1{=_{#1}}
\global\long\def\conv{\otherconv P}

\global\long\def\CC{\mathrm{CC}}
\global\long\def\Aoperator{\mathcal{A}}
\global\long\def\Coperator{\mathcal{C}}
\global\long\def\Foperator{\mathcal{F}}
\global\long\def\Callcc{\mathsf{call/cc}}

\ifdef{\lyxeptcstitlerunning}{\global\long\def\place{{f\negmedspace\! r}}
}{\global\long\def\place{{f\negmedspace r}}
}

\global\long\def\otherplace{\mathfrak{n}}
\global\long\def\Domain{\mathcal{D}}
\global\long\def\Range{\mathcal{R}}

\global\long\def\Exampletype{\mathcal{T}}
\global\long\def\mathconst#1{\mathsf{#1}}
\global\long\def\concat{\ensuremath{+\!\!\!\!+\,}}
\global\long\def\dom#1{\textsf{dom}(#1)}

\global\long\def\And{,\:\:}
\global\long\def\If{\text{if }}
\global\long\def\Ex{\text{for some }}
\global\long\def\St{\text{s.t. }}
\global\long\def\With{\text{with }}

\global\long\def\Let{\textsf{let }}
\global\long\def\Letrec{\textsf{let rec }}
\global\long\def\In{\textsf{in }}
\global\long\def\Match#1{\textsf{match }#1\textsf{ with}}
\global\long\def\Caseof#1{\textsf{case }#1\textsf{ of}}
\global\long\def\Where{\textsf{where }}
\global\long\def\Product{\textsf{product }}
\global\long\def\insteps#1{\qquad\mbox{in #1 steps}}
\global\long\def\myqedhere{\hfill\mbox{\qedhere}}

\global\long\def\mvar#1{\underline{\mathit{#1}}}

\section{\label{sec:introduction}Introduction }

\global\long\def\Add{\mathit{Add}}
 \global\long\def\Zero{\mathit{Zero}}
 \global\long\def\Succ{\mathit{S}}
\global\long\def\AddCBV{\mathit{AddCBV}}

Lambda calculus has historically been the foundation of choice for
modeling programs in pure functional languages. To capture features
that are not purely functional, there is an abundance of variations
in syntax and semantics: lambda calculus can be extended with special
operators: $\Aoperator$, $\Coperator$, $\Foperator_{+/-}^{+/-}$,
$\#$, and $\Callcc$, to incorporate control~\cite{Dyvbig07,felleisen1986,felleisen1987,Felleisen88}
or one can move to a calculus like $\lambda\mu$~\cite{parigot1992,ariola2007}
that allows the encoding of control-like operators. Also, one must
choose between the call-by-value and call-by-name reduction orders
for the calculus to correspond to the modeled language. If one wants
to study these calculi, one usually applies one of many CPS translations~\cite{Plotkin75,Danvy92}
which allow simulation of control operators in a system without them.
There is also a close connection between proofs in classical logic
and control operators, as was first pointed out by~Griffin~\cite{griffin1990},
who extended the Curry-Howard proofs-as-programs principle to include
rules of classical logic. The $\lambda\mu$-calculus of~\cite{parigot1992,ariola2003}
is also based on the relation between classical logical rules and
control-like constructions in the type theory.

In this paper, we introduce a different kind of calculus for formalizing
functional programs: \emph{continuation calculus}. It is deterministic
and Turing complete, yet its operational semantics are minimal: there
is only head reduction and no stack, environment, or context. Control
is natural to express, without additional operators.

We present continuation calculus as an untyped system. The study of
a typed version, and possibly the connections with the rules of classical
logic, is for future research. In the present paper we want to introduce
the system, show how to write programs in it and prove properties
about these programs, and show how control aspects and call-by-value~(CBV)
and call-by-name~(CBN) naturally fit into the system.

\medskip{}

Continuation calculus looks a bit like term rewriting and a bit like
$\lambda$-calculus, and it has ideas from both. A term in CC is of
the shape
\[
n.t_{1}.\cdots.t_{k},
\]
where $n$ is a \emph{name} and the $t_{i}$ are themselves terms.
The ``dot'' is a binary operator that associates to the left. Note
that terms do not contain variables. A\emph{ program} $P$ is a list
of \emph{program rules} of the form
\[
\d{n.x_{1}.\cdots.x_{k}}u
\]
where the $x_{i}$ are all different variables and $u$ is a term
over variables $x_{1}\ldots x_{k}$. This program rule is said to
\emph{define} $n$, and we make sure that in a program $P$ there
is \emph{at most one} definition of $n$. Here, CC already deviates
from term rewriting, where one would have, for example: 
\[
\begin{aligned}\dsplit{\Add(0,m)}m\\
\dsplit{\Add(S(n),m)}{S(\Add(n,m))}
\end{aligned}
\]
These syntactic case distinctions, or \emph{pattern matchings}, are
not possible in CC.

The meaning of the program rule $\d{n.x_{1}.\cdots.x_{k}}u$ is that
a term $n.t_{1}.\cdots.t_{k}$ evaluates to $u[\vec{x}:=\vec{t}]$:
the variables $\vec{x}$ in $t$ are replaced by the respective terms
$\vec{t}$. A peculiarity of CC is that one cannot evaluate ``deep
in a term'': we do not evaluate inside any of the $t_{i}$ and if
we have a term $n.t_{1}.\cdots.t_{m}$, where $m>k$, this term does
not evaluate. (This will even turn out to be a ``meaningless'' term.)

To give a better idea of how CC works, we give the example of the
natural numbers: how they are represented in CC and how one can program
addition on them. A natural number is either $0$, or $S(m)$ for
$m$ a natural number. We shall have a name $\Zero$ and a name $\Succ$.
The number $m$ will be represented by $\Succ.(\cdots.(S.\Zero)\cdots)$,
with $m$ times $\Succ$. So the numbers $0$ and $3$ are represented
by the terms $\Zero$ and $\Succ.(\Succ.(\Succ.\Zero))$.

The only way to extract information from a natural $m$ is to ``transfer
control'' to that natural. Execution should continue in some code
$c_{1}$ when $m=0$, and execution should continue in different code
$c_{2}$ when $m=S(p)$. This becomes possible by postulating the
following rules for $\Zero$ and $\Succ$: 
\[
\begin{aligned}\dsplit{\Zero.z.s}z\\
\dsplit{\Succ.x.z.s}{s.x}
\end{aligned}
\]
We will now implement call-by-value addition in CC on these natural
numbers. The idea of CC is that a function application does not just
produce an output value, but passes it to the next function, the continuation.
So we are looking for a term $\AddCBV$ that behaves as follows:
\begin{equation}
\AddCBV.\embeds m.\embeds p.r\reaches r.\embeds{m+p}\label{eq:spec-addcbv}
\end{equation}
 for all $m,p,r$, where $\reaches$ is the multi-step evaluation,
and $\embeds l$ are the terms that represent a natural number~$l$.
Term $r$ indicates where evaluation should continue after the computation
of $\embeds{m+p}$.

Equation (\ref{eq:spec-addcbv}) is the \emph{specification of $\AddCBV$}.
We will use the following algorithm: 
\[
\begin{aligned}0+p & =p\\
\Succ(m)+p & =m+\Succ(p)
\end{aligned}
\]

To program $\AddCBV$, we have to give a rule of the shape $\d{\AddCBV.x.y.r}t$.
We need to make a case distinction on the first argument $x$. If
$x=\Zero$, then the result of the addition is $y$, so we pass control
to $r.y$. If $x=S.u$, then control should eventually transfer to
$r.\left(\AddCBV.u.(\Succ.y)\right)$. Let us write down a first approximation
of $\AddCBV$:
\[
\d{\AddCBV.x.y.r}{x.(r.y).t}
\]
The term $t$ is yet to be determined. Now control transfers to $r.y$
when $x=\mathit{Zero}$, or to $t.u$ when $x=\Succ.u$. From $t.u$,
control should eventually transfer to $\AddCBV.u.(\Succ.y).r$. Let
us write down a naive second approximation of $\Add$, in which we
introduce a helper name $B$. 
\[
\begin{aligned}\dsplit{\AddCBV.x.y.r}{x.(r.y).B}\\
\dsplit{B.u}{\AddCBV.u.(\Succ.y).r}
\end{aligned}
\]
Unfortunately, the second line is not a valid rule: $y$ and $r$
are variables in the right-hand side of $B$, but do not occur in
its left-hand side. We can fix this by replacing $B$ with $B.y.r$
in both rules. 
\[
\begin{aligned}\dsplit{\AddCBV.x.y.r}{x.(r.y).(B.y.r)}\\
\dsplit{B.y.r.u}{\AddCBV.u.(\Succ.y).r}
\end{aligned}
\]
This is a general procedure for representing data types and functions
over data in CC. We can now prove the correctness of $\AddCBV$ by
showing (simultaneously by induction on $m$) that 
\[
\begin{aligned}\mathit{AddCBV.\embeds m.\embeds p.r} & \reaches r.\embeds{m+p}\\
B.\embeds p.r.\embeds{m'} & \reaches r.\embeds{m'+p+1}
\end{aligned}
\]
 We formally define and characterize continuation calculus in the
following sections. In Section~\ref{sec:data}, we define the meaning
of $\embeds{\cdot}$, which allows us to give a specification for
call-by-name addition, $\mathit{AddCBN}$: 
\[
\mathit{AddCBN}.\embeds m.\embeds p\in\embeds{m+p}
\]
This statement means that $\mathit{AddCBN}.\embeds m.\embeds p$ is
equivalent to and compatible with $\mathit{S.(\cdots(S.Zero)\cdots)}$,
with $m+p$ times $S$. The precise meaning of this statement will
be given in Definitions~\ref{def:embeds-set} and Remark~\ref{rem:lifted-embeds}.

The terms $\mathit{AddCBV}$ and $\mathit{AddCBN}$ are of a different
kind. Nonetheless, we will see in Section~\ref{sub:data-functions}
how call-by-value and call-by-name functions can be used together.
We show additional examples with $\mathit{FibCBV}$ and $\mathit{FibCBN}$
in Section~\ref{sub:data-functions}. Furthermore, we model and prove
a program with $\Callcc$ in Sections~\ref{sec:listmult-modeling}
and~\ref{sec:listmult-correctness}.

The authors have made a program available to evaluate continuation
calculus terms on \url{http://www.cs.ru.nl/~herman/ccalc/}. Evaluation
traces of the examples are included.

\section{Formalization}
\begin{defn}[names]
There is an infinite set $\Names$ of \index{names}names. Concrete
names are typically denoted as upper-case letters ($\mathit{A,B},\ldots$),
or capitalized words ($\mathit{True,False,And,\ldots}$); we refer
to \emph{any} name using $n$ and $m$.\end{defn}
\begin{interpretation*}
Names are used by programs to refer to `functionality', and will serve
the role of constructors, function names, as well as labels within
a function.\end{interpretation*}
\begin{defn}[universe]
\index{universe Univ
@universe $\Univ$}The set of terms $\Univ$ in continuation calculus is generated by:
\[
\Univ::=\Names\,|\,\Univ.\Univ
\]
 where . (dot) is a binary constructor. The \index{dot}dot is neither
associative nor commutative, and there shall be no overlap between
names and dot-applications. We will often use $M,N,t,u$ to refer
to terms. If we know that a term is a name, we often use $n,m$. We
sometimes use lower-case words that describe its function, e.g.\ $\mathit{abort}$,
or letters, e.g.\ $r$ for a `return continuation'.

The dot is read left-associative: when we write $A.B.C$, we mean
$(A.B).C$.\end{defn}
\begin{interpretation*}
Terms by themselves do not denote any computation, nor do they have
any value of themselves. We inspect value terms by `dotting' other
terms on them, and observing the reduction behavior. If for instance
$b$ represents a boolean value, then $\mathit{b.t.f}$ reduces to
$t$ if $b$ represents $\mathconst{true}$; $\mathit{b.t.f}$ reduces
to $f$ if $b$ represents $\mathconst{false}$.\end{interpretation*}
\begin{defn}[head, length]
All terms have a \index{head of a term}head, which is defined inductively:
\[
\begin{aligned}\head{n\in\Names} & =n\\
\head{a.b} & =\head a.
\end{aligned}
\]
 The head of a term is always a name.

The \emph{\index{length of term}}length of a term is determined by
the number of dots traversed towards the head. 
\[
\begin{aligned}\length{n\in\Names} & =0\\
\length{a.b} & =1+\length a.
\end{aligned}
\]

\end{defn}
This definition corresponds to left-associativity: $\length{n.t_{1}.t_{2}.\cdots.t_{k}}=k$.
\begin{defn}[variables]
There is an infinite set $\mathcal{V}$ of \index{variables Vars
@variables $\Vars$}variables. Terms are not variables, nor is the result of a dot application
ever a variable. 

Variables are used in CC rules as formal parameters to refer to terms.
We will use lower-case letters or words, or $x,y,z$ to refer to variables. 
\end{defn}
Note that we use similar notations for both variables and terms. However,
variables exist only in rules, so we expect no confusion.

\begin{defn}[rules]
\index{rule}Rules consist of a \index{left-hand side}\index{LHS}left-hand
and a \index{right-hand side}\index{RHS}right-hand side, generated
by: 
\[
\begin{aligned}\LHS & ::=\Names\,|\:\LHS.\Vars\qquad\mbox{where every variable occurs at most once}\\
\RHS & ::=\Names\,|\,\Vars\,|\:\RHS.\RHS
\end{aligned}
\]

Therefore, any right-hand side without variables is a term in $\Univ$.

A combination of a left-hand and a right-hand side is a rule only
when all variables in the right-hand side also occur in the left-hand
side. 
\[
\Rules::=\LHS\rightarrow\RHS\qquad\mbox{where all variables in \ensuremath{\RHS}\ occur in \ensuremath{\LHS}}
\]

A rule is said to \index{defining rule}\emph{define} the name in
its left-hand side; this name is also called the \emph{\index{head of a rule}head}.
The \index{length of left-hand side}\emph{length} of a left-hand
side is equal to the number of variables in it. 
\end{defn}

\begin{defn}[program]
A\emph{ }\index{program}\emph{program} is a finite set of rules,
where no two rules define the same name. We denote a program by $P$.
\[
\Programs=\{P\subseteq\Rules|P\text{ is finite and }\head{\cdot}\text{ is injective on the LHSes in }P)
\]

The \index{domain}\emph{domain} of a program is the set of names
defined by its rules. 
\[
\dom P=\{\head{\mathit{rule}}\,|\,\mathit{rule}\in P\}
\]

We will frequently \index{extend}extend programs: an \index{extension}\emph{extension}
of a program is a superset of that program. 
\end{defn}

\begin{defn}[evaluation]
\label{def:evaluation}A term can be evaluated under a program. \index{evaluation}Evaluation
consists of zero or more sequential steps, which are all deterministic.
For some terms and programs, evaluation never terminates.

We define the evaluation through the partial successor function $\next P{\cdot}:\Univ\pfun\Univ$.
We define $\next Pt$ when $P$ defines $\head t$, and $\length t$
equals the length of the corresponding left-hand side. 
\[
\next P{n.t_{1}.t_{2}.\cdots.t_{k}}=r[\vec{x}:=\vec{t}]\qquad\mbox{when ``\ensuremath{\d{n.x_{1}.x_{2}.\cdots.x_{k}}r}''}\in P
\]

It is allowed that $n=0$: 
\[
\next Pn=r\qquad\mbox{when ``\ensuremath{\d nr}''}\in P
\]

More informally, we write $M\rightarrow_{P}N$ when $\next PM=N$.
The reflexive and transitive closure of $\rightarrow_{P}$ will be
denoted $\twoheadrightarrow_{P}$. When $M\reaches N$, then we call
$N$ a \index{reduct}\emph{reduct} of $M$, and $M$ is said to be
\emph{defined}. When $\next PM$ is not defined, we write that $M$
is \emph{final}. Notation: $M\fin_{P}$. We also combine the notations:
if $\next PM=N$ and $\next PN$ is undefined, we may write $M\rightarrow_{P}N\fin_{P}$.
We will often leave the subscript $P$ implicit: $M\rightarrow N\fin$.

In Section~\ref{sec:term-categorizations}, we divide the final terms
in three groups: undefined terms, incomplete terms, and invalid terms.
Thus, these are the three cases where $\next PM$ is undefined.
\end{defn}

\begin{defn}[termination]
A term $M$ is said to be \emph{terminating} under a program $P$,
notation $M\evfinal_{P}$, when it has a final reduct: $\exists N\in\Univ:M\reaches_{P}N\fin_{P}$.
We often leave the subscript $P$ implicit.
\end{defn}

\section{\label{sec:term-categorizations}Categorization of terms}

A program divides all terms into four disjoint categories: undefined,
incomplete, complete, and invalid. A term's evaluation behavior depends
on its category, to which the term's\emph{ arity} is crucial.
\begin{defn}
The name $n$ has \emph{arity} $k$ if $P$ contains a rule of the
form $\d{n.x_{1}.\cdots.x_{k}}q$.

A term $t$ has \emph{arity} $k-i$ if it is of the form $n.q_{1}.\cdots.q_{i}$,
where $n$ has arity $k$ ($k\ge i$).
\end{defn}

\begin{defn}
Term $t$ is \emph{defined in $P$} if $\head t\in\dom P$, otherwise
we say that $t$ is \emph{undefined}.

Given a $t$ that is defined, we say that
\begin{itemize}
\item $t$ is \emph{complete} if the arity of $t$ is 0
\item $t$ is \emph{incomplete} if the arity of $t$ is $j>0$
\item $t$ is \emph{invalid} if is has no arity (that is, $t$ is of the
form $n.q_{1}.\cdots.q_{i}$, where $n$ has arity $k<i$)
\end{itemize}
\end{defn}
The four categories have distinct characteristics.
\begin{description}
\item [{Undefined~terms.}]  Term $M$ is undefined iff $M.N$ is undefined.
Extension of the program causes undefined terms to remain undefined
or become incomplete, complete, or invalid.

\begin{interpretation*}
Because variables are not part of a term in continuation calculus,
we use undefined names instead for similar purposes, as exemplified
by Theorem~\ref{thm:freshness}. This means that all CC terms are
`closed' in the lambda calculus sense.
\end{interpretation*}

The remaining three categories contain \emph{defined terms}: terms
with a head $\in\dom P$. Extension of the program does not change
the category of defined terms.

\item [{Incomplete~terms.}]  If $M$ is incomplete, then $M.N$ can be
incomplete or complete. 

\begin{interpretation*}
There are four important classes of incomplete terms.
\begin{itemize}
\item \emph{Data terms (see Section~\ref{sec:data})}. If $d$ represents
$c_{k}(v_{1},\cdots,v_{n_{k}})$ of a data type $\mathcal{D}$ with
$m$ constructors, then $\forall t_{1}\ldots t_{m}:d.\vec{t}\reaches t_{k}.\vec{v}$.
Examples: 
\[
\hspace{-30bp}\begin{aligned}\forall z,s:\mathit{Zero.z.s} & \reaches z & \quad & \mathit{Zero}\mbox{ represents }0\\
\forall z,s:\mathit{S.(S.(S.Zero))}.z.s & \reaches s.(S.(S.Zero)) &  & \mathit{S.(S.(S.Zero))}\mbox{ represents }\Succ(\Succ(\Succ(0)))
\end{aligned}
\]

\item \emph{Call-by-name function terms}. These are terms $f$ such that
$f.v_{1}.\cdots.v_{k}$ is a data term $\in\denot{\mathcal{D}}$ for
all $\vec{v}$ in the appropriate domain. Example using Figure~\ref{fig:definition-fib-add-cbv-cbn}:
\[
\hspace{-30bp}\begin{aligned}\forall z,s:\mathit{AddCBN.Zero.Zero.z.s} & \reaches z\\
\forall z,s:\mathit{AddCBN}.(S.Zero).(S.(S.Zero)).z.s & \reaches s.(\mathit{AddCBN.Zero.(S.(S.Zero))})
\end{aligned}
\]

Recall that $\mathit{AddCBN.(S.Zero)}.(S.(S.Zero))$ is a data term
that represents $3$. The second reduction shows that $ $$1+_{\mathrm{CBN}}2=\Succ(x)$,
for some $x$ represented by $\mathit{AddCBN.Zero.(S.(S.Zero))}$.

\item \emph{Call-by-value function terms}. These are terms $f$ of arity
$n+1$ such that for all $\vec{v}$ in a certain domain, $\forall r:f.v_{1}.\cdots.v_{n}.r\reaches r.t$
with data term $t$ depending only on $\vec{v}$, not on $r$. Example:
\[
\forall r:\mathit{\AddCBV.(S.Zero).(S.(S.Zero)).r\reaches r.(S.(S.(S.Zero)))}
\]

\item \emph{Return continuations}. These represent the state of the program,
parameterized over some values. Imagine a C program fragment ``\texttt{return
abs(2 - ?);}''. If we were to resume execution from such fragment,
then the program would run to completion, but it is necessary to first
fill in the question mark. If $r$ represents the above program fragment,
then $r.3$ represents the completed fragment ``\texttt{return abs(2
- 3);}''.

If a return continuation has arity $n$, then it corresponds to a
program fragment with $n$ question marks.

\end{itemize}
\end{interpretation*}
\item [{Invalid~terms.}] All invalid terms will be considered equivalent.
If $M$ is invalid, then $M.N$ is also invalid.
\item [{Complete~terms.}] This is the set of terms that have a successor.
If $M$ is complete, then $M.N$ is invalid.
\end{description}
\begin{figure}
\begin{centering}
\begin{tabular}{c}
\textbf{Common definitions}\tabularnewline
$\begin{aligned}\dsplit{Zero.z.s}z\\
\dsplit{S.m.z.s}{s.m}\\
\dsplit{Nil.ifempty.iflist}{ifempty}\\
\dsplit{Cons.n.l.ifempty.iflist}{iflist.n.l}
\end{aligned}
$\tabularnewline
\end{tabular}
\par\end{centering}

\bigskip{}

\global\long\def\alignedvspace{\vphantom{\underset{\mbox{example}}{}}}

\begin{centering}
\hspace*{-20bp}%
\begin{tabular}{>{\centering}p{0.52\textwidth}|>{\centering}p{0.58\textwidth}}
\textbf{Call-by-value functions} & \textbf{Call-by-name functions}\tabularnewline
$\begin{aligned}[t]\dsplit{AddCBV.x.y.r}{x.(r.y).(AddCBV'.y.r)}\\
\dsplit{AddCBV'.y.r.x'}{AddCBV.x'.(S.y).r}\alignedvspace\\
\dsplit{FibCBV.x.r}{x.(r.Zero).(FibCBV_{1}.r)}\\
\dsplit{FibCBV_{1}.r.y}{y.(r.(S.Zero)).(FibCBV_{2}.r.y)}\\
\dsplit{FibCBV_{2}.r.y.y'}{FibCBV.y.(FibCBV_{3}.r.y')}\\
\dsplit{FibCBV_{3}.r.y'.fib_{y}}{FibCBV.y'.(FibCBV_{4}.r.fib_{y})}\\
\dsplit{FibCBV_{4}.r.fib_{y}.fib_{y'}}{AddCBV.fib_{y}.fib_{y'}.r}
\end{aligned}
$ & $\begin{aligned}[t]\dsplit{AddCBN.x.y.z.s}{x.(y.z.s).(AddCBN'.y.s)}\\
\dsplit{AddCBN'.y.s.x'}{s.(AddCBN.x'.y)}\alignedvspace\\
\dsplit{FibCBN.x.z.s}{x.z.(FibCBN_{1}.z.s)}\\
\dsplit{FibCBN_{1}.z.s.y}{y.(s.Zero).(FibCBN_{2}.z.s.y)}\\
\dsplit{FibCBN_{2}.z.s.y.y'}{AddCBN.(FibCBN.y).(FibCBN.y').z.s}
\end{aligned}
$\tabularnewline
\end{tabular}
\par\end{centering}

\caption{\label{fig:definition-fib-add-cbv-cbn}Continuation calculus representations
of $+$ and $\mathconst{fib}$. The functions are applied in a different
way, as shown in Figure~\ref{fig:using-add-plus-cbv-cbn}. This incompatibility
is already indicated by the different arity: $\arity{}{\mathit{AddCBV}}=3\ne\arity{}{\mathit{AddCBN}}=4$,
and $\arity{}{FibCBV}=2\ne\arity{}{\mathit{FibCBN}}=3$. Figure~\ref{fig:using-add-plus-cbv-cbn}
shows how to use the four functions. }
\end{figure}

\section{\label{sec:reasoning}Reasoning with CC terms}

This section sketches the nature of continuation calculus through
theorems. All proofs are included in the appendix.

\subsection{\label{sub:theorems-general}Fresh names}
\begin{defn}
When a name $\place$ does not occur in the program under consideration,
then we call $\place$ a \emph{fresh name}. Furthermore, all fresh
names that we assume within theorems, lemmas, and propositions are
understood to be different. When we say $\place$ is fresh \emph{for
some objects}, then it is additionally required that $\place$ is
not mentioned in those objects.
\end{defn}
We can always assume another fresh name, because programs are finite
and there are infinitely many names.
\begin{interpretation*}
Fresh names allow us to reason on arbitrary terms, much like free
variables in lambda calculus.\end{interpretation*}
\begin{thm}
\label{thm:freshness}Let $M,N$ be terms, and let name $\place$
be fresh. The following equivalences hold:
\[
\begin{aligned}M\reaches N & \Longleftrightarrow\forall t\in\Univ:M[\place:=t]\reaches N[\place:=t]\\
M\evfinal & \Longleftrightarrow\exists t\in\Univ:M[\place:=t]\evfinal
\end{aligned}
\]

\end{thm}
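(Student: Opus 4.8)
The plan is to reduce both equivalences to a single one-step simulation lemma together with a case split on whether $\head M$ equals the fresh name $\place$, after which three of the four implications come almost for free by instantiating the quantified term with $\place$ itself.

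First I would record a substitution-composition lemma: for any rule $n.x_{1}.\cdots.x_{k}\rightarrow r$ of $P$ and any terms $u_{1},\dots,u_{k},t$, one has $r[\vec{x}:=\vec{u}][\place:=t]=r[\vec{x}:=u_{1}[\place:=t],\dots,u_{k}[\place:=t]]$. This goes by induction on the right-hand side $r$; the only interesting cases are $r=x_{i}$, where both sides equal $u_{i}[\place:=t]$, and $r=r_{1}.r_{2}$, where one applies the induction hypothesis to each factor. The point where freshness is used is precisely here: since $\place$ occurs nowhere in $P$, it occurs neither in $r$ nor as the head $n$, so no occurrence of $\place$ in $r$ can interfere.

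Next I would prove the simulation lemma. If $\head M=\place$ then, by freshness, $\head M\notin\dom P$, so $M$ is undefined and hence final. If $\head M\neq\place$, then substituting $t$ for $\place$ only rewrites proper subterms of $M$, leaving the head name and the number of top-level dots untouched, so $\head{M[\place:=t]}=\head M$ and $\length{M[\place:=t]}=\length M$; consequently $M$ is complete iff $M[\place:=t]$ is, and when $M=n.u_{1}.\cdots.u_{k}$ is complete and matches the rule $n.\vec{x}\rightarrow r$, the term $M[\place:=t]=n.u_{1}[\place:=t].\cdots.u_{k}[\place:=t]$ matches the same rule, and by the substitution lemma $\next P{M[\place:=t]}=(\next P M)[\place:=t]$. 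In summary: for $\head M\neq\place$, $M$ is final iff $M[\place:=t]$ is, and otherwise $M\rightarrow_{P}\next P M$ with $\next P{M[\place:=t]}=(\next P M)[\place:=t]$; for $\head M=\place$, $M$ is always final.

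Finally I would assemble the four implications. First equivalence: $(\Leftarrow)$ take $t:=\place$, so $M[\place:=\place]=M$ and $N[\place:=\place]=N$; $(\Rightarrow)$ induct on the length of $M\reaches N$, noting that a complete term has head $\neq\place$ and then invoking the simulation lemma at each step. Second equivalence: $(\Rightarrow)$ again take $t:=\place$; $(\Leftarrow)$ suppose $M[\place:=t]\reaches N'\fin$ in $\ell$ steps and induct on $\ell$ — if $\head M=\place$ then $M$ is already final, so $M\evfinal$; if $\head M\neq\place$ and $\ell=0$ then $M[\place:=t]$ final forces $M$ final; and if $\head M\neq\place$ and $\ell>0$ then $M[\place:=t]$ complete forces $M$ complete, $M\rightarrow_{P}\next P M$, and the residual reduction of length $\ell-1$ starts from $(\next P M)[\place:=t]$, so the induction hypothesis yields $\next P M\evfinal$, hence $M\evfinal$. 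The argument is largely bookkeeping once the invariant is in place; the two spots that need genuine care are the substitution-composition lemma and the case split on $\head M=\place$. The latter is exactly what makes the existential quantifier in the termination equivalence necessary: the universal version of the forward direction is \emph{false} (e.g. $M=\place$ is final, but $M[\place:=t]$ need not terminate), so in the $(\Rightarrow)$ direction one must choose $t=\place$ rather than attempt a claim for all $t$.
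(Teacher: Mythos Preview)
Your proposal is correct and follows essentially the same approach as the paper. The paper factors the argument into four equivalences (single-step, finality, multi-step, termination), proving the one-step substitution commutation and the finality case split exactly as you do, and then obtains $\reaches$ by iterating the one-step result and $\evfinal$ by a length-based contradiction rather than your direct induction; these are minor organizational differences, not a different route.
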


\begin{lem}[determinism]
\label{lem:determinism}Let $M,\vec{t},\vec{\vphantom{t}u}$ be terms,
and let $m,n$ be undefined names in $P$. If $M\twoheadrightarrow_{P}m.t_{1}.\cdots.t_{k}$
and $M\twoheadrightarrow_{P}n.u_{1}.\cdots.u_{l}$, then $m.t_{1}.\cdots.t_{k}=n.u_{1}.\cdots.u_{l}$.\end{lem}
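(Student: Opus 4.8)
The plan is to exploit that $\to_P$ is nothing but the graph of the partial function $\next P{\cdot}$, so reduction from any term proceeds along a unique linear chain, together with the fact that both target terms in the statement are \emph{final} and hence can only sit at the very end of that chain.

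First I would record two elementary facts. (1) The terms $m.t_1.\cdots.t_k$ and $n.u_1.\cdots.u_l$ are final: since $m$ is undefined in $P$ we have $\head{m.t_1.\cdots.t_k} = m \notin \dom P$, so by Definition~\ref{def:evaluation} the partial function $\next P{\cdot}$ is undefined on this term; the same holds for $n.u_1.\cdots.u_l$. (This is just the observation from Section~\ref{sec:term-categorizations} that undefined terms are final.) (2) A final term has no reduct other than itself: if $A \fin_P$ and $A \reaches_P C$, then, since there is no $A'$ with $A \to_P A'$, the reduction $A \reaches_P C$ must have length $0$, so $A = C$.

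Next I would prove the key linearity claim: for any term $M$, if $M \reaches_P A$ and $M \reaches_P B$, then $A \reaches_P B$ or $B \reaches_P A$. This goes by induction on the length of the reduction $M \reaches_P A$. If that length is $0$, then $M = A$, and $M \reaches_P B$ gives $A \reaches_P B$. Otherwise write $M \to_P M' \reaches_P A$. If $M = B$ we are done, since then $B \reaches_P A$. Otherwise the reduction $M \reaches_P B$ also begins with a step $M \to_P M''$; because $\to_P$ is functional, $M' = \next PM = M''$, so $M' \reaches_P B$, and the induction hypothesis applied to $M' \reaches_P A$ and $M' \reaches_P B$ finishes the case.

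Finally, applying this claim with $A := m.t_1.\cdots.t_k$ and $B := n.u_1.\cdots.u_l$ shows that one of the two reduces to the other; but both are final, so by fact (2) that reduction is trivial and $m.t_1.\cdots.t_k = n.u_1.\cdots.u_l$. I do not anticipate any genuine obstacle: the only points needing care are the bookkeeping in the induction — in particular, invoking functionality of $\next P{\cdot}$ to identify the first steps of the two reductions — and the small but essential remark that a final term has no nontrivial reducts.
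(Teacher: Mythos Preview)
Your proof is correct and follows essentially the same approach as the paper: both note that the two target terms are final (since their heads are undefined) and then appeal to determinism of $\rightarrow_P$ to conclude that $M$'s reduction path is linear, forcing the two final reducts to coincide. The paper's proof is extremely terse (``So $m.\vec t$ and $n.\vec u$ are the term at the end of the execution path of $M$; we see that they must be equal''), whereas you spell out the linearity claim and its induction explicitly, but the idea is identical.
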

\begin{rem}
If $m$ or $n$ is defined, this may not hold. For instance, in the
program ``$\d AB;\d BC$'', we have $A\reaches B$ and $A\reaches C$,
yet $B\ne C$.
\end{rem}

\subsection{\label{sub:theorems-equivalence}Term equivalence}

Besides syntactic equality ($=$), we introduce two equivalences on
terms: common reduct ($\conv$) and observational equivalence ($\obseq_{P}$).
\begin{defn}
Terms $M,N$ have a common reduct if $M\reaches t\backreaches N$
for some term $t$. Notation: $M\conv N$.\end{defn}
\begin{prop}
\label{prp:converging-final-then-reaches}Suppose $M\conv N\fin$.
Then $M\reaches N$. 
\end{prop}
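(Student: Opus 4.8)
Suppose $M\conv N\fin$. Then $M\reaches N$.

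The plan is to unfold the definition of common reduct and then observe that a final term is a reduction dead-end, so the only term it reaches is itself. Concretely, $M\conv N$ means there is some $t\in\Univ$ with $M\reaches t$ and $N\reaches t$. So the whole proposition reduces to the claim that $N\fin$ together with $N\reaches t$ forces $t=N$; once we have that, substituting back into $M\reaches t$ immediately yields $M\reaches N$.

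For the auxiliary claim I would argue by induction on the length of the reduction sequence witnessing $N\reaches t$. If the sequence has length $0$ then $t=N$ and we are done. If it has positive length, its first step is a step $N\rightarrow_P N'$, which by definition of $\rightarrow_P$ requires $\next P N$ to be defined; but $N\fin_P$ says precisely that $\next P N$ is undefined, a contradiction. Hence only the length-$0$ case occurs. Note that no determinism lemma is actually needed here — only the bare fact that $\next P{\cdot}$ is a (partial) function, so that a final term has no successor at all. There is no real obstacle; the proof is a two- or three-line unfolding of definitions.
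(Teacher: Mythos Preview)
Your proof is correct and matches the paper's own argument essentially line for line: unfold $M\conv N$ to get a common reduct $t$, then observe that $N\fin$ forces the reduction $N\reaches t$ to have length zero, so $t=N$ and hence $M\reaches N$. The paper phrases this in two sentences without the explicit induction framing, but the content is identical.
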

Common reduct is a strong equivalence, comparable to $\beta$-conversion
for lambda calculus. Terms $M\ne N$ can only have a common reduct
if at least one of them is complete. This makes pure $\conv$ unsuitable
for relating data or function terms, which are incomplete. In fact,
$\conv$ is not a congruence. 

To remedy this, we define an observational equivalence in terms of
termination.
\begin{defn}
Terms $M$ and $N$ are \emph{observationally equivalent} under a
program $P$, notation $M\obseq_{P}N$, when for all extension programs
$P'\supseteq P$ and terms $X$: 
\[
X.M\evfinal_{P'}\Longleftrightarrow X.N\evfinal_{P'}
\]
We may write $M\obseq N$ if the program is implicit.
\end{defn}
Examples: $\mathit{AddCBV}.\sembed m.\sembed 0\obseq\mathit{AddCBV}.\sembed 0.\sembed m$
and $\sembed 0\obseq\sembedm{True}$, but $\sembed 0\not\obseq\sembed 1$;
see Section~\ref{sec:data}.
\begin{lem}
\label{lem:equiv-congruence}$\obseq$ is a congruence. In other words,
if $M\obseq M'$ and $N\obseq N'$, then $M.N\obseq M'.N'$.
\end{lem}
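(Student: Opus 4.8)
The plan is to reduce the congruence claim to a single "one-side" statement and then apply symmetry. Concretely, it suffices to show that if $M \obseq M'$ and $N \obseq N'$, then $M.N \obseq M'.N'$; by transitivity of $\obseq$ (which follows immediately from its definition as a biconditional quantified over all $P' \supseteq P$ and all $X$), it is enough to prove the two special cases $M.N \obseq M'.N$ and $M'.N \obseq M'.N'$ separately and chain them. So I would first prove: \emph{if $M \obseq M'$, then $M.N \obseq M'.N$ for every term $N$}, and then prove the symmetric-in-position statement \emph{if $N \obseq N'$, then $X_0.N \obseq X_0.N'$ for every term $X_0$} — actually the second is literally an instance of the definition of $N \obseq N'$ with the observing context being $X_0$ itself, so almost nothing is needed there. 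The real content is the first case, congruence in the \emph{head} position.

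For the first case, fix $P' \supseteq P$ and a term $X$; I must show $X.(M.N) \evfinal_{P'} \iff X.(M'.N) \evfinal_{P'}$. The idea is to absorb the "$.N$" and the outer "$X.$" into a single fresh observing context applied to $M$ (resp.\ $M'$), so that $M \obseq M'$ can be invoked. Let $\place$ be a fresh name (fresh for $P'$, $X$, $M$, $M'$, $N$ — possible since programs are finite and $\Names$ is infinite). Consider the extension $P'' := P' \cup \{\, \place.x.y.z \to (z.x).y \,\}$ (a fresh rule of arity $3$, which is legitimate since $\place \notin \dom{P'}$). Then in $P''$ we have $\place.X.N.M \to (M.X).N$ — wait, I want $X.(M.N)$, so I should instead use the rule $\place.x.y.z \to x.(z.y)$, giving $\place.X.N.M \to X.(M.N)$ and likewise $\place.X.N.M' \to X.(M'.N)$. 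Since $\place$ is fresh, adding this rule changes nothing about the reduction behaviour of $P'$-terms (it only adds reductions for terms headed by $\place$), so $X.(M.N) \evfinal_{P''} \iff X.(M.N)\evfinal_{P'}$, and similarly for $M'$. Now $X.(M.N)$ is a reduct of the term $(\place.X.N).M$ in exactly one step, so by Proposition~\ref{prp:converging-final-then-reaches} (or just directly, since $M \rightarrow$-free manipulation: a term terminates iff its one-step reduct does) $(\place.X.N).M \evfinal_{P''} \iff X.(M.N) \evfinal_{P''}$. Applying $M \obseq M'$ with the extension $P''$ and the observing term $\place.X.N$ gives $(\place.X.N).M \evfinal_{P''} \iff (\place.X.N).M' \evfinal_{P''}$, and unwinding the same equivalences on the $M'$ side closes the chain.

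The main obstacle — and the step to be careful about — is the bookkeeping around \emph{extensions}: $\obseq$ quantifies over \emph{all} $P' \supseteq P$, so when I invoke $M \obseq M'$ I must do so at the already-extended program $P''$, not at $P$; this is fine because $P'' \supseteq P' \supseteq P$, but it is the point where a naive argument would slip. A secondary subtlety is confirming that prepending the fresh rule for $\place$ and then taking the one-step reduct genuinely preserves "$\evfinal$" in both directions: freshness of $\place$ guarantees the new rule is never triggered while reducing any term not mentioning $\place$, and $(\place.X.N).M$ mentions $\place$ only at its head with the right arity, so its unique successor is $X.(M.N)$ and thereafter the reduction never sees $\place$ again. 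One should also double-check the arity arithmetic: $\place$ has arity $3$, and $(\place.X.N).M = \place.X.N.M$ has length $3$, hence is complete and reduces. Everything else is routine.
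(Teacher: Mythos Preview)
Your decomposition into two steps via transitivity is fine, and your argument for the left-argument case (congruence in the head position, using a fresh name with rule $\place.x.y.z \to x.(z.y)$) is essentially the paper's trick and is correct. The problem is the other half.

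You claim that ``if $N \obseq N'$, then $X_0.N \obseq X_0.N'$'' is ``literally an instance of the definition of $N \obseq N'$ with the observing context being $X_0$ itself''. It is not. The definition of $N \obseq N'$ gives you $Y.N \evfinal_{P'} \Leftrightarrow Y.N' \evfinal_{P'}$ for every term $Y$ and every extension $P'$. Instantiating $Y := X_0$ yields only $X_0.N \evfinal \Leftrightarrow X_0.N' \evfinal$. But $X_0.N \obseq X_0.N'$ requires $X.(X_0.N) \evfinal \Leftrightarrow X.(X_0.N') \evfinal$ for \emph{every} $X$, and $X.(X_0.N)$ is not of the form $Y.N$ for any $Y$: the $N$ sits inside the right argument of the outermost dot, not at the outermost right position. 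So the definition does not apply directly, and the right-argument case needs exactly the same fresh-name rotation trick as the left-argument case (e.g.\ a rule $\place'.x.y.z \to x.(y.z)$ so that $\place'.X.X_0.N \to X.(X_0.N)$, putting $N$ last).

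For comparison, the paper does both swaps in a single pass: it extends $P'$ by two fresh rules $A.m.n \to X.(m.n)$ and $B.n.m \to X.(m.n)$ (with the given $X$ baked in), and then chains
\[
X.(M.N)\evfinal \Leftrightarrow A.M.N\evfinal \Leftrightarrow A.M.N'\evfinal \Leftrightarrow B.N'.M\evfinal \Leftrightarrow B.N'.M'\evfinal \Leftrightarrow X.(M'.N')\evfinal,
\]
using $N\obseq N'$ at the second step and $M\obseq M'$ at the fourth. Your approach is a valid alternative once the gap is fixed; the essential idea---rotate the term via a fresh rule so the component to be replaced lands in the outermost-right slot where the definition of $\obseq$ applies---is the same.
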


\paragraph{Characterization}

The reduction behavior of complete terms divides them in three classes.
Observational equivalence distinguishes the classes.
\begin{itemize}
\item \emph{Nontermination}. When $M$ is nonterminating and the program
is extended,~$M$ remains nonterminating.

If the reduction path of $M$ is finite, we call it \emph{terminating},
and we may write $M\evfinal$. This is shorthand for $\exists N\in\Univ:M\reaches N\fin$.

\item \emph{Proper reduction to an incomplete or invalid term}. All such
$M$ are observationally equivalent to an invalid term. When the program
is extended, such terms remain in their execution class.
\item \emph{Proper reduction to an undefined term}. Observational equivalence
distinguishes terms $M,N$ if the head of their final term is different.
Therefore, there are infinitely many subclasses.

When the program is extended, the final term may become defined. This
can cause such $M$ to fall in a different class. 

\end{itemize}
The following proposition and theorem show that $\obseq$ distinguishes
three equivalence classes: if $M\obseq N$, then $M$ and $N$ are
in the same class.
\begin{prop}
\label{prp:equiv-then-same-termination}If $M\obseq N$, then $M\evfinal\Leftrightarrow N\evfinal$.
\end{prop}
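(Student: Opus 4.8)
The plan is to reduce termination under $P$ to the termination of a ``probed'' term under a one-rule extension of $P$, and then invoke the hypothesis $M\obseq N$ for that extension. Pick a name $\place$ that is fresh for $M$ and $N$ (possible since the program and the terms are finite), and set $P'=P\cup\{\d{\place.x}{x}\}$; since $\place\notin\dom P$ this is indeed a program, and it extends $P$, with $\place$ having arity~$1$ in $P'$. The key claim is that for every term $L$ not mentioning $\place$ — in particular for $L\in\{M,N\}$ — one has $\place.L\evfinal_{P'}\Longleftrightarrow L\evfinal_{P}$.

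Granting the claim, the proposition is immediate. Instantiating the definition of $\obseq_{P}$ with the extension $P'\supseteq P$ and the context $X:=\place$ gives $\place.M\evfinal_{P'}\Longleftrightarrow\place.N\evfinal_{P'}$, and chaining with the claim for $M$ and for $N$ yields
\[
M\evfinal_{P}\;\Longleftrightarrow\;\place.M\evfinal_{P'}\;\Longleftrightarrow\;\place.N\evfinal_{P'}\;\Longleftrightarrow\;N\evfinal_{P}.
\]

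To prove the claim, argue in two steps. First, $\place.L$ is a complete term of $P'$ (its head $\place$ is defined and has arity~$1$, and $\place.L$ carries exactly one argument), so it is not final and its unique successor is $\next{P'}{\place.L}=L$; hence $\place.L\evfinal_{P'}\Longleftrightarrow L\evfinal_{P'}$. Second, adjoining a rule for the fresh name $\place$ is conservative for reductions of a term $L$ in which $\place$ does not occur: by induction on reduction length, every $P'$-reduct of $L$ again omits $\place$ (a reduct has the form $r[\vec x:=\vec t]$ for the right-hand side $r$ of the applied rule, and since the head of the current term is not $\place$ the applied rule lies in $P$), so every $P'$-reduction out of $L$ is already a $P$-reduction, and conversely every $P$-reduction is a $P'$-reduction; moreover, for terms omitting $\place$, being final is the same in $P$ and in $P'$, since $\head{\cdot}\in\dom P\Leftrightarrow\head{\cdot}\in\dom{P'}$ and the arities agree. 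Thus $L\evfinal_{P'}\Longleftrightarrow L\evfinal_{P}$, which together with the first step proves the claim.

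The only mildly technical point, and the place I expect to spend the most words, is the conservativity induction of the last paragraph; everything else is a direct unfolding of the definitions of $\obseq$, $\evfinal$, and $\next{P}{\cdot}$. If this conservativity fact is already packaged as a lemma — for instance within the discussion of program extension in Section~\ref{sec:term-categorizations}, which already records that extension does not change the category of defined terms — it can simply be cited in its place.
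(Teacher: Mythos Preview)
Your proof is correct and follows essentially the same route as the paper: add a one-rule extension $P'=P\cup\{\d{\place.x}{x}\}$ for a fresh name, use $\place.L\rightarrow_{P'}L$ to relate $\place.L\evfinal_{P'}$ and $L\evfinal_{P'}$, and use conservativity of the extension to get $L\evfinal_{P'}\Leftrightarrow L\evfinal_{P}$. The paper's version is terser but structurally identical; your conservativity argument is indeed packaged in the paper as Lemma~\ref{lem:evfinal-program-shortening} (and Proposition~\ref{prp:fresh-reducts}), and your care in taking $\place$ fresh for $M$ and $N$ as well as for $P$ is exactly what is needed.
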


\begin{thm}
\label{thm:obseq-then-reaches-same-name}Let $M\obseq N$ and $M\reaches\place.t_{1}.\cdots.t_{k}\fin$
with $\place\notin\dom P$. Then $N\reaches\place.u_{1}.\cdots.u_{k}\fin$
for some $\vec{u}$.

\end{thm}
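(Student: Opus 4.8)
The plan is to turn the property ``$N$ reaches a final term headed by $\place$ with exactly $k$ arguments'' into one observable event --- non-termination under a single carefully chosen extension of $P$ --- and then transport that event from $M$ to $N$ through $\obseq$. First, since $M\reaches\place.t_{1}.\cdots.t_{k}\fin$ we have $M\evfinal$, so Proposition~\ref{prp:equiv-then-same-termination} gives a final reduct of $N$: fix $N'$ with $N\reaches_{P}N'\fin_{P}$. It then suffices to show that $N'=\place.u_{1}.\cdots.u_{k}$ for some $\vec u$.

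Pick names $L$ and $I$ that are fresh for $M$, $N$, and $\place$ (recall $\place$ is only assumed to lie outside $\dom P$ and will in general occur in $M$), and put
\[
P'=P\cup\{\,\place.x_{1}.\cdots.x_{k}\rightarrow L,\ \ L\rightarrow L,\ \ I.w\rightarrow w\,\}.
\]
This is a legitimate program extending $P$: the three new heads are distinct and outside $\dom P$. Since $P\subseteq P'$ and heads remain injective, the rule fired at any $P$-step is still the unique rule for that head in $P'$, so $\rightarrow_{P}\subseteq\rightarrow_{P'}$ and hence $\reaches_{P}\subseteq\reaches_{P'}$. Consequently $M\reaches_{P'}\place.t_{1}.\cdots.t_{k}$; but under $P'$ the name $\place$ has arity $k$, so this term is complete with $\next{P'}{\place.t_{1}.\cdots.t_{k}}=L$, after which $L\rightarrow_{P'}L\rightarrow_{P'}\cdots$ diverges. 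Hence $M\not\evfinal_{P'}$, and as $I.M\rightarrow_{P'}M$ is the unique first step, also $I.M\not\evfinal_{P'}$. By $M\obseq N$ this gives $I.N\not\evfinal_{P'}$.

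Since $I.N\rightarrow_{P'}N\reaches_{P'}N'$, the term $N'$ cannot be final under $P'$ --- otherwise $I.N$ would terminate. Write $N'=n'.v_{1}.\cdots.v_{j}$. If $n'\in\dom P$, its arity is unchanged in $P'$, so $N'$ --- being final under $P$ and hence incomplete or invalid --- stays incomplete or invalid, hence final, under $P'$; impossible. The names $L,I$ occur neither in $N$ nor in any rule of $P$, so no $P$-reduction can introduce them and $L,I\notin N'$; thus $n'\notin\{L,I\}$. And if $n'$ were still undefined in $P'$, then $N'$ would remain final under $P'$; so $n'\in\dom{P'}\setminus\dom P$, which forces $n'=\place$. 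Finally, with $\arity{P'}{\place}=k$, the only way for $N'=\place.v_{1}.\cdots.v_{j}$ to be non-final under $P'$ is to be complete, i.e.\ $j=k$. So $N'=\place.u_{1}.\cdots.u_{k}$, which together with $N\reaches_{P}N'\fin_{P}$ is exactly the assertion.

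The delicate part is the bookkeeping: one must check that extending $P$ preserves all existing reductions (injectivity of heads), that it preserves the finality of every possible head of $N'$ other than $\place$, and that choosing $L$ and $I$ fresh \emph{for $M$ and $N$} --- not merely for the program --- is precisely what rules out $\head{N'}\in\{L,I\}$. The rule $L\rightarrow L$ is the gadget that converts ``$N'$ is headed by $\place$ with exactly $k$ arguments'' into the sole observable consequence of looping, and the case analysis confirms that nothing else can trigger it; I expect that case analysis to be the main obstacle to get fully right.
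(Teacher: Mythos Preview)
Your proof is correct and follows essentially the same approach as the paper: extend $P$ so that $\place$-headed terms of length $k$ diverge, transport non-termination through $\obseq$, and conclude by a case analysis on the head and length of $N'$. The paper's gadget is slightly leaner --- a single self-loop rule $\place.x_{1}.\cdots.x_{k}\rightarrow\place.x_{1}.\cdots.x_{k}$ rather than your three rules --- and it invokes Proposition~\ref{prp:equiv-then-same-termination} (lifted to $P'$) in place of your explicit $I$-context, but these are cosmetic differences.
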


\paragraph{Retrieving observational equivalence}

Complete terms with a common reduct are observationally equal. If
$M,N$ are incomplete, but they have common reducts when extended
with terms, then also $M\obseq N$.
\begin{thm}
\label{thm:converging-ext-then-equiv}Let $M,N$ be terms with arity
$k$. If $M.t_{1}.\cdots.t_{k}\conv N.t_{1}.\cdots.t_{k}$ for all
$\vec{t}$, then $M\obseq N$.\end{thm}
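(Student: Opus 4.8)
The plan is to fix an arbitrary extension $P'\supseteq P$ and an arbitrary term $X$ and show $X.M\evfinal_{P'}\Leftrightarrow X.N\evfinal_{P'}$; this is exactly $M\obseq N$. First note that one reduction step survives enlarging the program (the rule defining a given name is unique, hence unchanged), so $\reaches$ and therefore $\conv$ persist to $P'$, and the hypothesis yields $M.t_1.\cdots.t_k\conv_{P'}N.t_1.\cdots.t_k$ for all $\vec t$. Since $\conv$ is symmetric, this hypothesis is symmetric in $M$ and $N$, so it suffices to prove one implication, say $X.M\evfinal_{P'}\Rightarrow X.N\evfinal_{P'}$. I would introduce the relation $M\sim N$ meaning ``$N'$ is obtained from $N$ by replacing zero or more subterm‑occurrences of $M$ by $N$'' — equivalently, the least congruence for the dot that is reflexive on names and relates $M$ to $N$. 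Two routine facts about $\sim$: (a) \emph{substitutivity}: if $b_j\sim b_j'$ for all $j$ and $w$ is any term, then $w[\vec q:=\vec b]\sim w[\vec q:=\vec b']$, by induction on $w$ using only reflexivity‑on‑names and congruence; (b) \emph{finality is preserved}: if $Y\sim Y'$ and $Y\fin_{P'}$ then $Y'\fin_{P'}$, by inspecting the three reasons for finality (undefined, incomplete, invalid) and using that $M$ and $N$ have the \emph{same} arity $k$, so replacing an $M$ sitting at a head position by $N$ changes neither which case applies nor (in the undefined case) the head.

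The core is the statement: if $Y\sim Y'$ and $Y\evfinal_{P'}$, then $Y'\evfinal_{P'}$. I would prove it by induction on the length $d$ of the (by determinism, unique) terminating reduction $Y\reaches_{P'}G\fin_{P'}$. If $d=0$ then $Y$ is final, so $Y'$ is final by (b). If $d\ge 1$ then $Y$ is complete, $Y=m.s_1.\cdots.s_l\to_{P'}Z$ via a rule $m.\vec x\to\rho$, and I split on whether the active redex sits ``at a replaced $M$'': that is, whether $m=\head M$, $Y$ has the form $M.b_1.\cdots.b_k$ (the $k$ extra arguments are forced by completeness), \emph{and} that head occurrence of $M$ is one of the replaced ones in $Y'$. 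In the negative case $Y'=m.s_1'.\cdots.s_l'$ with $s_j\sim s_j'$, so $Y'\to_{P'}Z':=\rho[\vec x:=\vec s']$, and $Z\sim Z'$ by substitutivity (any literal $M$ inside $\rho$ maps to itself on both sides); $Z$ terminates in $d-1$ steps, so $Z'\evfinal_{P'}$ by the induction hypothesis, hence $Y'\evfinal_{P'}$.

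The interesting case is the positive one: $Y=M.b_1.\cdots.b_k$, $Y'=N.b_1'.\cdots.b_k'$ with $b_j\sim b_j'$. Here I would take fresh names $q_1,\dots,q_k$ (fresh for $P'$, $M$, $N$, $\vec b$, $\vec b'$) and apply the persisted hypothesis to get $M.q_1.\cdots.q_k\conv_{P'}N.q_1.\cdots.q_k$, i.e.\ a common term $w$ with $M.\vec q\reaches_{P'}w\backreaches_{P'}N.\vec q$. By Theorem~\ref{thm:freshness} (iterated over $q_1,\dots,q_k$), substituting $\vec q:=\vec b$ gives $M.\vec b\reaches_{P'}w[\vec q:=\vec b]$ and substituting $\vec q:=\vec b'$ gives $N.\vec b'\reaches_{P'}w[\vec q:=\vec b']$, while $w[\vec q:=\vec b]\sim w[\vec q:=\vec b']$ by substitutivity. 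Since $Y=M.\vec b\to Z$ is the forced first step, $Z\reaches_{P'}w[\vec q:=\vec b]$, so $w[\vec q:=\vec b]$ lies on $Z$'s unique path to $G$ and hence terminates in at most $d-1<d$ steps; the induction hypothesis then gives $w[\vec q:=\vec b']\evfinal_{P'}$, and $Y'=N.\vec b'\reaches_{P'}w[\vec q:=\vec b']\evfinal_{P'}$. Applying the core lemma to $X.M\sim X.N$ (replace the displayed $M$) finishes the implication, and the converse is the symmetric statement; hence $M\obseq N$.

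I expect the main obstacle to be exactly making the induction measure decrease in the positive case. The naive move — using the hypothesis directly on the concrete arguments, as $M.\vec b\conv N.\vec b$ and $M.\vec b'\conv N.\vec b'$ — produces two \emph{unrelated} common reducts (unrelated because $\vec b$ and $\vec b'$ themselves may carry $M$'s and $N$'s and so differ), and then one loses all control over how long the $N$‑side reduction to its common reduct is, so the induction does not close. Passing through fresh argument names and Theorem~\ref{thm:freshness} is what keeps the two sides synchronized (a single $w$, substituted two ways), which is precisely what makes the $M$‑side reduct an initial segment of $Z$'s terminating path. The remaining bookkeeping — that $\sim$ is really ``replace some occurrences of $M$'', substitutivity, and preservation of finality — is straightforward once one keeps track of the fact that $M$ and $N$ share the arity $k$.
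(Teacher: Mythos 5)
Your argument is essentially the paper's own proof in different clothing: your relation $\sim$ is exactly the paper's device of comparing $X[\otherplace:=M]$ with $X[\otherplace:=N]$ for a fresh ``hole'' name $\otherplace$, your fresh argument names $q_{1},\ldots,q_{k}$ are the paper's $\place_{1},\ldots,\place_{k}$, and your core lemma --- induction on the length of the terminating reduction, with the case split on whether the contracted redex sits at a replaced head occurrence --- is Lemma~\ref{lem:common-ext-reduct-then-evfinal}. You also correctly isolated the one real idea: a \emph{single} common reduct $w$ of $M.\vec{q}$ and $N.\vec{q}$, instantiated two ways via Theorem~\ref{thm:freshness}, is what keeps the two sides synchronized and the measure decreasing.

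One step as written does not quite close. From $M.\vec{b}\rightarrow Z$ and $M.\vec{b}\reaches w[\vec{q}:=\vec{b}]$ you conclude $Z\reaches w[\vec{q}:=\vec{b}]$; determinism gives this only when the latter reduction takes at least one step. If the chosen common reduct happens to be $w=M.\vec{q}$ itself (zero steps on the $M$ side), then $w[\vec{q}:=\vec{b}]=Y$ terminates in exactly $d$ steps and the induction hypothesis is not applicable, so the argument would loop. The repair is immediate: $M.\vec{q}$ has arity $0$, hence a successor, so any common reduct of $M.\vec{q}$ and $N.\vec{q}$ can be pushed forward along the (deterministic) reduction path to one that $M.\vec{q}$ reaches in at least one step. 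The paper builds exactly this into the hypothesis of its lemma by writing $M.\place_{1}.\cdots.\place_{k}\rightarrow M'\reaches t\backreaches N'\leftarrow N.\place_{1}.\cdots.\place_{k}$. With that one-line adjustment your proof is complete.
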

\begin{cor}
\label{cor:converging-then-equiv}Suppose $M\conv N$ and $\arity{}M=\arity{}N=0$.
Then $M\obseq N$.
\end{cor}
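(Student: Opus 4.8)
The plan is to obtain this statement directly as the degenerate case $k=0$ of Theorem~\ref{thm:converging-ext-then-equiv}. First I would observe that, by the categorization in Section~\ref{sec:term-categorizations}, the hypothesis $\arity{}M=\arity{}N=0$ says exactly that $M$ and $N$ are (defined and) complete, so in particular both have an arity, and it equals $k:=0$. Instantiating Theorem~\ref{thm:converging-ext-then-equiv} with this $k$, the antecedent ``$M.t_{1}.\cdots.t_{k}\conv N.t_{1}.\cdots.t_{k}$ for all $\vec{t}$'' becomes a family indexed by tuples of length $0$, i.e.\ a one-element family whose sole member is the statement $M\conv N$; and the consequent is $M\obseq N$. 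Hence the corollary is literally the $k=0$ instance of the theorem, and the proof is a single line of invocation once one has checked that Theorem~\ref{thm:converging-ext-then-equiv} is not tacitly restricted to $k\ge 1$ (the only point that needs a moment's care) and that the empty dotting is admissible.

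For completeness I would also sketch why a self-contained argument is less natural here. Unfolding $\obseq$, one fixes an extension $P'\supseteq P$ and a term $X$ and must show $X.M\evfinal_{P'}\Leftrightarrow X.N\evfinal_{P'}$. From $M\conv N$ one gets a common reduct $t$ with $M\reaches t\backreaches N$, and reductions are stable under program extension, so also $M\reaches_{P'}t\backreaches_{P'}N$. The tempting move $X.M\reaches X.t\backreaches X.N$ is \emph{not} available, however, because CC never reduces below the head: $X.M$ reduces at $\head X$, not inside $M$. Repairing this would require the congruence property of Lemma~\ref{lem:equiv-congruence} together with the already-established link between common reducts and observational equivalence for complete terms — which is essentially the present corollary applied to the relevant subterms, so that route risks circularity unless staged carefully. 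This is precisely the content that Theorem~\ref{thm:converging-ext-then-equiv} packages, which is why the clean proof of the corollary is to cite it.

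I do not expect a genuine obstacle for the corollary itself: all the real work lives in Theorem~\ref{thm:converging-ext-then-equiv}. The one thing to be deliberate about is the bookkeeping around $k=0$ — confirming that ``$M,N$ have arity $k$'' includes the complete case $k=0$, and that the universally quantified hypothesis over $\vec{t}$ collapses correctly to $M\conv N$ when $\vec{t}$ is the empty tuple.

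\begin{proof}
Apply Theorem~\ref{thm:converging-ext-then-equiv} with $k=0$. Since $\arity{}M=\arity{}N=0$, both $M$ and $N$ are terms of arity $k=0$, and for $k=0$ the hypothesis ``$M.t_{1}.\cdots.t_{k}\conv N.t_{1}.\cdots.t_{k}$ for all $\vec{t}$'' is just the assumption $M\conv N$. The theorem yields $M\obseq N$.
\end{proof}
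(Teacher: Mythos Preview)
Your proposal is correct and matches the paper's approach: the corollary is stated immediately after Theorem~\ref{thm:converging-ext-then-equiv} with no separate proof, so it is indeed meant to be the $k=0$ instance of that theorem. Your check that the empty tuple collapses the hypothesis to $M\conv N$ is exactly the intended reading.
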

\global\long\def\vecplace{\mathrlap{\;\vec{\phantom{f}}}\place}

\begin{rem}
\label{rem:goto-omega-to-but-not-obseq}$M\reaches N$ does not always
imply $M\obseq N$ if $\arity{}N>0$. For instance, take the following
program: 
\[
\begin{aligned}\dsplit{Goto.x}x\\
\dsplit{Omega.x}{x.x}
\end{aligned}
\]

Then $\mathit{Goto.Omega}\rightarrow\mathit{Omega}$, an incomplete
term. We cannot `fix' $\mathit{Goto.Omega}$ by appending another
term: $\mathit{Goto.Omega.Omega}$ is invalid. Name $\mathit{Goto}$
is defined for one `operand' term, and the superfluous $\mathit{Omega}$
term cannot be `memorized' as with lambda calculus. On the other hand,
$\mathit{Omega.Omega}\rightarrow\mathit{Omega.Omega}$ is nonterminating.
Hence, $\mathit{Goto.Omega}\rightarrow\mathit{Omega}$ but $\mathit{Goto.Omega}\not\obseq\mathit{Omega}$.
\end{rem}
Note that $\mathit{Goto.Omega}\not\obseq\mathit{Omega}$ is only possible
because $\arity{}{\mathit{Goto.Omega}}\ne\arity{}{\mathit{Omega}}$.

\subsection{\label{sub:theorems-program-substitution-union}Program substitution
and union}
\begin{defn}[fresh substitution]
Let $n_{1}\ldots n_{k}$ be names, and $m_{1}\ldots m_{k}$ be fresh
for $M$, all different. Then $M[\vec{n}:=\vec{m}]$ is equal to $M$
where all occurrences of $\vec{n}$ are simultaneously replaced by
$\vec{m}$, respectively. The \emph{fresh substitution} $P[\vec{n}:=\vec{m}]$
replaces all $\vec{n}$ by $\vec{m}$ in both left and right hand
sides of the rules of $P$.
\end{defn}
We can combine two programs by applying a fresh substitution to one
of them, and taking the union. As the following theorems shows, this
preserves most interesting properties.
\begin{thm}
\label{thm:program-transformation}Suppose that $P'\supseteq P$ is
an extension program, and $M,N$ are terms. Then the left hand equations
hold. Let $\sigma$ denote a fresh substitution $[\vec{n}:=\vec{m}]$.
Then the right hand equations hold. 
\[
\begin{array}{rclcrcl}
M\rightarrow_{P}N & \Longrightarrow & M\rightarrow_{P'}N & \qquad\qquad & M\rightarrow_{P}N & \Longleftrightarrow & M\sigma\rightarrow_{P\sigma}N\sigma\\
M\twoheadrightarrow_{P}N & \Longrightarrow & M\twoheadrightarrow_{P'}N &  & M\twoheadrightarrow_{P}N & \Longleftrightarrow & M\sigma\twoheadrightarrow_{P\sigma}N\sigma\\
M\fin_{P} & \Longleftarrow & M\fin_{P'}\: &  & M\fin_{P} & \Longleftrightarrow & M\sigma\fin_{P\sigma}\\
M\conv N & \Longrightarrow & M\otherconv{P'}N &  & M\conv N & \Longleftrightarrow & M\sigma\otherconv{P\sigma}N\sigma\\
M\obseq_{P}N & \Longrightarrow & M\obseq_{P'}N &  & M\obseq_{P}N & \Longleftrightarrow & M\sigma\obseq_{P\sigma}N\sigma
\end{array}
\]
\end{thm}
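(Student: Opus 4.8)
The plan is to handle the two columns separately. The \emph{left column} records that enlarging a program never destroys an existing computation. For $M\to_P N$: by Definition~\ref{def:evaluation}, $P$ has a (unique) rule defining $\head M$, say $n.x_1.\cdots.x_k\to r$ with $k=\length M$, and $N=r[\vec x:=\vec t]$ where $M=n.t_1.\cdots.t_k$. Since $P'\supseteq P$ this rule also lies in $P'$, and since $P'$ is a program it is still the only rule with head $\head M$; hence $\next{P'}M$ is defined and equals $N$. Then $M\twoheadrightarrow_P N\Rightarrow M\twoheadrightarrow_{P'}N$ follows by induction on reduction length; $M\fin_{P'}\Rightarrow M\fin_P$ is the contrapositive of ``if $\next PM$ is defined, so is $\next{P'}M$''; $M\conv N\Rightarrow M\otherconv{P'}N$ applies the $\twoheadrightarrow$-case to the two legs of a common reduct; and $M\obseq_PN\Rightarrow M\obseq_{P'}N$ is immediate, since the extensions $P''\supseteq P'$ appearing in the definition of $\obseq_{P'}$ form a subcollection of the extensions $P''\supseteq P$ appearing in that of $\obseq_P$.

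For the \emph{right column}, the trick is to replace the fresh substitution $\sigma=[\vec n:=\vec m]$ by a total permutation $\hat\sigma$ of $\Names$ --- concretely the involution swapping each $n_i$ with $m_i$ and fixing every other name. Because the $m_i$ are fresh for $M$, $N$ and $P$, the permutation $\hat\sigma$ acts on them exactly as $\sigma$ does, so $M\hat\sigma=M\sigma$, $N\hat\sigma=N\sigma$, $P\hat\sigma=P\sigma$, and it suffices to prove each biconditional with $\hat\sigma$ in place of $\sigma$. A permutation of names acts functorially on $\Univ$ and on programs, and commutes with the dot, with $\head{\cdot}$, with $\length\cdot$, with $\dom\cdot$, and with left-hand-side lengths; the single computational fact to verify is
\[
\next{P\hat\sigma}{M\hat\sigma}=(\next PM)\hat\sigma ,
\]
where the right-hand side is vacuous when $\next PM$ is undefined (so ``defined'' transports in both directions). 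Definedness transports because $\head M\in\dom P\Leftrightarrow\head{M\hat\sigma}\in\dom{P\hat\sigma}$ ($\hat\sigma$ bijective) and $\length{M\hat\sigma}=\length M$ matches the unchanged arity; when the rule $n.x_1.\cdots.x_k\to r$ fires on $M=n.t_1.\cdots.t_k$, the renamed rule $n\hat\sigma.x_1.\cdots.x_k\to r\hat\sigma$ fires on $M\hat\sigma=n\hat\sigma.(t_1\hat\sigma).\cdots.(t_k\hat\sigma)$, so we only need $(r\hat\sigma)[\vec x:=\vec{t\hat\sigma}]=(r[\vec x:=\vec t])\hat\sigma$, a one-line structural induction on $r$ using that $\hat\sigma$ renames names and leaves the variables $\vec x$ fixed.

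Everything in the right column now falls out. The forward direction of $M\to_P N\Leftrightarrow M\hat\sigma\to_{P\hat\sigma}N\hat\sigma$ is the displayed identity; the backward direction is that identity applied to $\hat\sigma^{-1}$, again a permutation. The multi-step case is induction; the $\fin$ case is the definedness half; the $\conv$ case transports the common-reduct witness $t$ to $t\hat\sigma$ and back. For $\obseq$, bijectivity of $\hat\sigma$ makes $P''\mapsto P''\hat\sigma$ a bijection between extensions of $P$ and extensions of $P\sigma$ and $X\mapsto X\hat\sigma$ a bijection on test terms, with $X.M\evfinal_{P''}\Leftrightarrow(X\hat\sigma).(M\hat\sigma)\evfinal_{P''\hat\sigma}$ by the previous cases; reindexing the two universal quantifiers in the definition of $\obseq_P$ along $\hat\sigma$ turns it verbatim into the definition of $M\sigma\obseq_{P\sigma}N\sigma$, and conversely.

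I expect the only genuine subtlety to be this $\obseq$-case: a direct attempt to prove $M\obseq_PN\Rightarrow M\sigma\obseq_{P\sigma}N\sigma$ by pulling an arbitrary extension $Q\supseteq P\sigma$ back along $\sigma^{-1}$ breaks whenever $Q$ happens to mention some $n_i$, because $\sigma^{-1}$ is only a legitimate fresh substitution when its targets avoid the objects it is applied to. Passing to a total permutation removes that obstruction at the modest price of checking that the permutation still agrees with $\sigma$ on the objects at hand --- which is precisely what freshness of $\vec m$ provides. Everything else is routine induction and unfolding of definitions.
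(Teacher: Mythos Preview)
Your left-column argument matches the paper's almost verbatim: a rule in $P$ is still the unique rule in $P'$, so single steps transport, and the remaining implications follow by induction, contraposition, and unfolding the definition of $\obseq$.

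For the right column you take a genuinely different route. The paper does \emph{not} extend $\sigma$ to a permutation; it works directly with the fresh substitution. Equivalences~1 and~3 are proved by a case split on whether $\head M\in\dom P$, together with the substitution identity $r\sigma[\vec x:=\vec t\sigma]=r[\vec x:=\vec t]\sigma$; equivalences~2 and~4 follow formally. For equivalence~5 the paper proceeds asymmetrically: it first proves the $\Leftarrow$ direction by taking an arbitrary $Q\supseteq P$, pushing forward to $Q\sigma\supseteq P\sigma$, and invoking equivalences~2/3 there; the $\Rightarrow$ direction is then obtained by applying the $\Leftarrow$ direction to $\sigma^{-1}$, using the remark that $\vec n$ are absent from $M\sigma,N\sigma,P\sigma$ so $\sigma^{-1}$ is again a fresh substitution.

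Your permutation trick buys uniformity: both directions of every equivalence come from a single argument, and the $\obseq$ case becomes a clean reindexing of the universal quantifiers rather than a push-forward/pull-back asymmetry. It also sidesteps the need to verify that equivalences~2/3 remain valid when applied with $\sigma$ to an \emph{arbitrary} extension $Q$ (for which the $m_i$ need not be fresh), a point the paper glosses over. The cost is the small upfront check that $\hat\sigma$ agrees with $\sigma$ on $M,N,P$, which indeed follows from freshness of the $m_i$; you should note that without loss of generality every $n_i$ actually occurs in $M$, $N$, or $P$ (vacuous components can be dropped), so that $\{n_i\}\cap\{m_j\}=\emptyset$ and the involution is well defined.
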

\begin{rem}
Names $\vec{n}$ are not mentioned in $M\sigma$ and $P\sigma$, so
we can apply Theorem~\ref{thm:program-transformation} with $\sigma^{-1}$
on $M\sigma$ and $P\sigma$. 
\end{rem}

\begin{thm}
\label{thm:equiv-program-shortening-lengthening}Suppose that $P'$
extends $P$, but $\dom{P'\setminus P}$ are not mentioned in $M$,
$N$, or $P$. \\
 Then $M\obseq_{P}N\Longleftrightarrow M\obseq_{P'}N$.
\end{thm}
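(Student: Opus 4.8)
The left-to-right implication is immediate: every extension of $P'$ is an extension of $P$, so the defining condition of $M\obseq_{P}N$ specializes to that of $M\obseq_{P'}N$ (this is also the last line of the left-hand column of Theorem~\ref{thm:program-transformation}). Hence the content lies in the converse. Assume $M\obseq_{P'}N$; to prove $M\obseq_{P}N$ we fix an arbitrary extension $P''\supseteq P$ and an arbitrary term $X$, and we must show $X.M\evfinal_{P''}\Longleftrightarrow X.N\evfinal_{P''}$.

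The obstacle is that $P''$ need not extend $P'$: it may redefine, or leave undefined, the names in $\dom{P'\setminus P}$, so we cannot feed $P''$ directly into the hypothesis $M\obseq_{P'}N$. We sidestep this by renaming those names. Choose distinct fresh names $\vec m$, one for each name of $\dom{P'\setminus P}$, all fresh for $P'$, $P''$, $M$, $N$, and $X$ (possible, since all these objects are finite), and let $\sigma=[\dom{P'\setminus P}:=\vec m]$. Since $\dom{P'\setminus P}$ is mentioned in none of $M$, $N$, $P$, we have $M\sigma=M$, $N\sigma=N$, $P\sigma=P$, hence $P'\sigma=P\cup(P'\setminus P)\sigma$, and Theorem~\ref{thm:program-transformation} turns $M\obseq_{P'}N$ into $M\obseq_{P'\sigma}N$. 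Now put $Q:=P''\cup(P'\setminus P)\sigma$. Because $\vec m$ is fresh for $P''$, $Q$ is again a program, and $Q=P''\cup(P'\setminus P)\sigma\supseteq P\cup(P'\setminus P)\sigma=P'\sigma$ since $P''\supseteq P$. Instantiating the definition of $\obseq_{P'\sigma}$ at the extension $Q$ and the term $X$ therefore gives $X.M\evfinal_{Q}\Longleftrightarrow X.N\evfinal_{Q}$.

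It remains to transfer termination between $Q$ and $P''$. The point is that the extra rules of $Q$ are inert on terms not mentioning $\vec m$: every rule of $(P'\setminus P)\sigma$ has its head in $\vec m$, while every rule of $P''$ has a right-hand side free of $\vec m$. Concretely, call a term $\vec m$-free if no name of $\vec m$ occurs in it; then for an $\vec m$-free $t$ the successor $\next{Q}{t}$ is defined exactly when $\next{P''}{t}$ is, the two coincide when defined, and the result is again $\vec m$-free (the applicable rule of $Q$ must lie in $P''$, and substituting $\vec m$-free subterms of $t$ into an $\vec m$-free right-hand side yields an $\vec m$-free term). Since $X.M$ is $\vec m$-free, an induction on reduction length shows that the reduction sequences of $X.M$ under $Q$ and under $P''$ are literally the same, with finality decided the same way at each step; hence $X.M\evfinal_{Q}\Longleftrightarrow X.M\evfinal_{P''}$, and likewise $X.N\evfinal_{Q}\Longleftrightarrow X.N\evfinal_{P''}$. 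Chaining these with the equivalence of the previous paragraph yields $X.M\evfinal_{P''}\Longleftrightarrow X.N\evfinal_{P''}$, as required. The one genuinely non-bookkeeping step, and thus the main obstacle, is this inertness claim, since it is the only place where one reasons directly about reductions instead of invoking the earlier transformation theorems.
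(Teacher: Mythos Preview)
Your proof is correct and follows essentially the same strategy as the paper: rename the names in $\dom{P'\setminus P}$ to fresh ones so that the two programs can be united, then use an inertness argument (the paper's Lemma~\ref{lem:evfinal-program-shortening}) to transfer termination. The one cosmetic difference is that you apply the renaming $\sigma$ to $P'$ and keep the test extension $P''$ and term $X$ untouched, whereas the paper applies $\sigma$ to the test extension (their $Q$) and to $X$, keeping $P'$ fixed; your choice is marginally tidier since it avoids carrying $X\sigma$ through the chain, but the two arguments are otherwise interchangeable.
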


\section{\label{sec:data}Data terms and functions}

In this section, we show how to program some standard data in continuation
calculus. We first give a canonical representation of data as CC terms.
We then give essential semantic characteristics, and show that other
terms have those characteristics as well. Observational equivalence
guarantees that termination of the whole program is only dependent
on those characteristics. In fact, it will prove possible to implement
``call-by-name values'', which delay computation until it is needed,
by relying on those characteristics.

\paragraph{Standard representation of data}

In Section \ref{sec:introduction}, we postulated terms for natural
numbers in continuation calculus. We will now give this standard representation
formally, as well as the representation of booleans and natural lists.
\begin{defn}
\label{def:sembed}For a mathematical object $o$, we define a standard
representation~$\sembed o$ of that object as a CC term, which we
call a \emph{data term}. We postulate that the rules in the middle
column are included in programs that use the corresponding terms.
\[
\begin{aligned}\sembedm{True} & =\mathit{True} & \dsplit{True.t.f}t &  &  & \hspace{-35bp} & \smash{\begin{aligned}\phantom{\sembedm{True}}\\
\left.\begin{aligned}\phantom{\sembedm{True}}\\
\phantom{\sembedm{True}}
\end{aligned}
\right\} \text{booleans}
\end{aligned}
}\\
\sembedm{False} & =\mathit{False} & \dsplit{False.t.f}f\\
\sembed 0 & =\mathit{Zero} & \dsplit{Zero.z.s}z &  &  & \hspace{-35bp} & \smash{\begin{aligned}\phantom{\sembedm{True}}\\
\left.\begin{aligned}\phantom{\sembedm{True}}\\
\phantom{\sembedm{True}}
\end{aligned}
\right\} \text{naturals}
\end{aligned}
}\\
\sembed{m+1} & =\mathit{S}.\sembed m & \dsplit{S.x.z.s}{s.x}\\
\sembed{[]} & =\mathit{Nil} & \dsplit{Nil.e.c}e &  &  & \hspace{-35bp} & \smash{\begin{aligned}\phantom{\sembedm{True}}\\
\left.\begin{aligned}\phantom{\sembedm{True}}\\
\phantom{\sembedm{True}}
\end{aligned}
\right\} \text{lists of naturals}
\end{aligned}
}\\
\sembed{m:l} & =\mathit{Cons}.\sembed m.\sembed l\quad & \dsplit{Cons.x.xs.e.c}{c.x.xs}
\end{aligned}
\]
\end{defn}
\begin{thm}
$\sembedm{True}\not\obseq\sembedm{False}$.\end{thm}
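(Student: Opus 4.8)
The plan is to refute the equivalence directly, by exhibiting a single test. By definition, $\sembedm{True}\obseq_P\sembedm{False}$ would mean that for \emph{every} extension $P'\supseteq P$ and \emph{every} term $X$ we have $X.\mathit{True}\evfinal_{P'}\Longleftrightarrow X.\mathit{False}\evfinal_{P'}$. So it suffices to produce one extension $P'$ and one term $X$ for which these two terminations disagree. The idea is the usual one: let $X$ feed the boolean two continuations, a harmless one and a diverging one, so that the boolean itself decides whether evaluation halts. This forces us to use a genuine extension, since $P$ on its own (just the $\mathit{True}$/$\mathit{False}$ rules) has no nonterminating term; the quantification over extensions in the definition of $\obseq$ is exactly what makes this legitimate.

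Concretely, I would take three fresh names $\mathit{Loop}$, $\mathit{Test}$, $\otherplace$ (all distinct, none occurring in $P$) and set
$P' = P\cup\{\,\d{Loop.x}{x.x},\ \d{Test.b}{b.\otherplace.(Loop.Loop)}\,\}$,
keeping $\otherplace$ undefined. Take $X:=\mathit{Test}$. In $P'$ the name $\mathit{Test}$ has arity $1$, $\mathit{Loop}$ has arity $1$, and $\mathit{True},\mathit{False}$ have arity $2$, so one computes
$\mathit{Test.True}\to\mathit{True}.\otherplace.(\mathit{Loop.Loop})\to\otherplace$,
where $\otherplace$ is final (an undefined name with no operands); hence $\mathit{Test.True}\evfinal_{P'}$. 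On the other hand
$\mathit{Test.False}\to\mathit{False}.\otherplace.(\mathit{Loop.Loop})\to\mathit{Loop.Loop}\to\mathit{Loop.Loop}\to\cdots$,
which never reaches a final term, so $\mathit{Test.False}$ is not terminating under $P'$. The termination predicate disagrees on $\mathit{Test.True}$ and $\mathit{Test.False}$, so $\sembedm{True}\not\obseq\sembedm{False}$.

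The only real work is the \emph{routine} check that each displayed reduction step is a legal step of $\next{P'}{\cdot}$ — i.e. at each step the head is in $\dom{P'}$ with arity equal to the number of operands present — and the observation that $\mathit{Loop.Loop}$ loops forever while $\otherplace$ is final because $\otherplace\notin\dom{P'}$; there is no genuine obstacle here. If one prefers a more structural argument, the same separation can be obtained abstractly: from $\sembedm{True}\obseq\sembedm{False}$, pass to $P'=P\cup\{\d{Loop.x}{x.x}\}$ via Theorem~\ref{thm:program-transformation}, apply Lemma~\ref{lem:equiv-congruence} twice to get $\mathit{True}.(\mathit{True}.\otherplace.\otherplace).(\mathit{Loop.Loop})\obseq_{P'}\mathit{False}.(\mathit{True}.\otherplace.\otherplace).(\mathit{Loop.Loop})$, note both sides are complete and reduce in one step to $\mathit{True}.\otherplace.\otherplace$ and $\mathit{Loop.Loop}$ respectively, so Corollary~\ref{cor:converging-then-equiv} gives $\mathit{True}.\otherplace.\otherplace\obseq_{P'}\mathit{Loop.Loop}$, and then Proposition~\ref{prp:equiv-then-same-termination} is contradicted since the left side terminates (to $\otherplace$) and the right side does not. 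I would present the first, self-contained version.
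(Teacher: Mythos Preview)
Your proof is correct. Both the direct construction and the alternative structural argument work as stated; in the alternative you are implicitly using Corollary~\ref{cor:converging-then-equiv} together with transitivity of $\obseq$ to pass from $\mathit{True}.a.b\obseq_{P'}\mathit{False}.a.b$ to $\mathit{True}.\otherplace.\otherplace\obseq_{P'}\mathit{Loop.Loop}$, which is fine since each side has a common reduct with its one-step successor and all four terms have arity~$0$.

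The paper takes a different and somewhat shorter route. Rather than building a diverging computation, it picks two fresh (hence undefined) names $t,f$, observes $\sembedm{True}.t.f\to t$ and $\sembedm{False}.t.f\to f$, and invokes Theorem~\ref{thm:obseq-then-reaches-same-name}: if $\sembedm{True}.t.f\obseq\sembedm{False}.t.f$ then both must reduce to final terms with the \emph{same} undefined head, which they visibly do not. Congruence (Lemma~\ref{lem:equiv-congruence}) then lifts the inequivalence back to $\sembedm{True}\not\obseq\sembedm{False}$. The trade-off is that the paper's argument is quick but leans on Theorem~\ref{thm:obseq-then-reaches-same-name}, whose own proof in the appendix \emph{does} construct an extension with a loop on $\place$; your first argument effectively inlines that construction and so is more self-contained. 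Your second argument is intermediate: it still builds a loop, but packages the remaining reasoning through the general equivalence machinery rather than a bespoke $\mathit{Test}$ name.
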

\begin{proof}
Observe that for all $t,f$, $\mathit{\sembedm{True}.t.f\to t}$ and
$\sembedm{False}.t.f\to f$. Take two fresh names $t$ and $f$. Contraposition
of Theorem~\ref{thm:obseq-then-reaches-same-name} proves $\sembedm{True}.t.f\not\obseq\sembedm{False}.t.f$.
Because $\obseq$ is a congruence with respect to dot, we can conclude
$\sembedm{True}\not\obseq\sembedm{False}$.
\end{proof}
Similar results hold for $\Naturals$ and $\ListNats$, but we do
not provide a proof here.

\paragraph{A broader definition}

The behavioral essence of these data terms is that they take a continuation
for each constructor, and they continue execution in the respective
continuation, augmented with the constructor arguments. For instance,
$\sembed 0.z.s\reaches z$ and $\sembed{n+1}.z.s\reaches s.\sembed n$.
 We can capture this essence in the following term sets; $\denot{\Naturals}$
and $\denot{\ListNats}$ are the smallest sets satisfying the following
equalities.
\[
\begin{aligned}\denot{\mathbb{B}}=\{M\in\Univ| & \:\forall t,f\in\Univ:M.t.f\reaches t\vee M.t.f\reaches f\}\\
\denot{\mathbb{N}}=\{M\in\Univ| & \left(\forall z,s\in\Univ:M.z.s\reaches z\right)\\
 & \vee\exists x\in\denot{\mathbb{N}}\;\forall z,s\in\Univ:M.z.s\reaches s.x\}\\
\denot{\mathconst{List}_{\mathbb{N}}}=\{M\in\Univ| & \left(\forall e,c\in\Univ:M.e.c\reaches e\right)\\
 & \vee\exists x\in\denot{\mathbb{N}},xs\in\denot{\mathconst{List}_{\mathbb{N}}}\;\forall e,c\in\Univ:M.e.c\reaches c.x.xs\}
\end{aligned}
\]

\begin{rem}
These sets are dependent on the program. The sets are monotone with
respect to program extension: if $M\in\denot{\Bools}$, $\in\denot{\Naturals}$,
or $\in\denot{\ListNats}$ for a program, then $M$ is also in the
corresponding set for any extension program.

The sets include other terms besides $\sembedm{True}$, $\sembedm{False}$,
$\sembed n$, and $\sembed l$. Consider the following program fragment,
which implements the $\le$ operator on natural numbers.
\[
\begin{aligned}\dsplit{Leq.x.y.t.f}{x.t.(Leq'.y.t.f)}\\
\dsplit{Leq'.y.t.f.x'}{Leq.y.x'.f.t}
\end{aligned}
\]

Given naturals $m,p$ and this program fragment, $\mathit{Leq.\sembed m.\sembed p}\in\denot{\Bools}$.
Even more, $\mathit{Leq.\sembed m.\sembed p}\obseq\sembed{m\le p}$.
In general, it follows from Theorem~\ref{thm:converging-ext-then-equiv}
that all $M\in\denot{\Bools}$ are observationally equivalent to $\sembedm{True}$
or $\sembedm{False}$. The appendix contains a proof of the analogous
statement for $\denot{\Naturals}$:\end{rem}
\begin{prop}
\label{prp:all-nats-eqv-succs-zero}All terms in $\denot{\mathbb{N}}$
are observationally equivalent to $\sembed k$ for some~$k$.
\end{prop}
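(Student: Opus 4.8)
The plan is to prove Proposition~\ref{prp:all-nats-eqv-succs-zero} by structural induction following the inductive definition of $\denot{\Naturals}$, using Theorem~\ref{thm:converging-ext-then-equiv} (or its corollary) as the engine that turns ``reduces like $\sembed k$ when fed continuations'' into genuine observational equivalence. Since $\denot{\Naturals}$ is defined as the \emph{smallest} set satisfying the stated closure conditions, it suffices to show that the property ``$M \obseq \sembed k$ for some $k$'' is closed under those conditions; equivalently, one does induction on the derivation witnessing $M\in\denot{\Naturals}$.

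First I would handle the base case: suppose $M\in\denot{\Naturals}$ because $\forall z,s\in\Univ: M.z.s \reaches z$. I claim $M\obseq\sembed 0 = \mathit{Zero}$. Both $M$ and $\mathit{Zero}$ have arity $2$ (here I am using that $M\in\denot{\Naturals}$ forces $M$ to be incomplete of arity $2$, which should follow from the fact that $M.z.s$ reduces — so $M.z$ is complete, so $M$ has arity $2$; I would spell this out). Now for all terms $z,s$ we have $M.z.s\reaches z$ and $\mathit{Zero}.z.s\to z$, so $M.z.s$ and $\mathit{Zero}.z.s$ have the common reduct $z$, i.e.\ $M.z.s\conv \mathit{Zero}.z.s$. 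By Theorem~\ref{thm:converging-ext-then-equiv} with $k=2$, $M\obseq\sembed 0$.

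Next, the inductive step: suppose $M\in\denot{\Naturals}$ because there is $x\in\denot{\Naturals}$ with $\forall z,s: M.z.s\reaches s.x$. By the induction hypothesis, $x\obseq\sembed k$ for some $k$; I claim $M\obseq\sembed{k+1} = \mathit{S}.\sembed k$. Again both sides have arity $2$. For all $z,s$: $M.z.s\reaches s.x$, while $\mathit{S}.\sembed k.z.s \to s.\sembed k$. These need not be syntactically equal, so a bare common-reduct argument does not immediately apply — this is the one place requiring care. I would fix $s$ and $z$ and argue using congruence of $\obseq$ (Lemma~\ref{lem:equiv-congruence}): from $x\obseq\sembed k$ we get $s.x\obseq s.\sembed k$ for every term $s$, hence $M.z.s \reaches s.x \obseq s.\sembed k \backreaches \mathit{S}.\sembed k.z.s$. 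Combining a reduction with an observational equivalence: since $\reaches$ implies $\obseq$ for \emph{complete} terms (complete terms with a common reduct — here $M.z.s$ and its reduct $s.x$ — are $\obseq$ by Corollary~\ref{cor:converging-then-equiv}), and $M.z.s$, $s.x$, $s.\sembed k$, $\mathit{S}.\sembed k.z.s$ are all complete, transitivity of $\obseq$ gives $M.z.s\obseq \mathit{S}.\sembed k.z.s$ for all $z,s$. Then Theorem~\ref{thm:converging-ext-then-equiv}'s proof technique applies — though note its hypothesis is phrased with $\conv$, not $\obseq$, so I may instead need to invoke congruence of $\obseq$ in the reverse direction, or prove a small variant: if $M.\vec t \obseq N.\vec t$ for all $\vec t$ with $M,N$ of arity $k$, then $M\obseq N$. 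The latter follows directly from the definition of $\obseq$ (for any extension $P'$ and any $X$, $X.M\evfinal$ iff ... unfolding through the arity-$k$ applications), so I would state and use that.

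The main obstacle is exactly this mismatch between the $\conv$-hypothesis of Theorem~\ref{thm:converging-ext-then-equiv} and the fact that in the successor case we only get $M.z.s\obseq \sembed{k+1}.z.s$ rather than a literal common reduct. The cleanest resolution is to prove the mild strengthening ``arity-$k$ terms that are pointwise $\obseq$ after $k$ applications are themselves $\obseq$'' (immediate from the definition of $\obseq$), and route the successor case through it; the base case can still use Theorem~\ref{thm:converging-ext-then-equiv} directly or the same strengthening. Everything else — arities, the structural induction on membership in the least fixpoint $\denot{\Naturals}$, and the appeals to congruence — is routine.
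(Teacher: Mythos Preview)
Your overall plan---structural induction on membership in $\denot{\Naturals}$, with Theorem~\ref{thm:converging-ext-then-equiv} converting pointwise common reducts into observational equivalence---is exactly the paper's approach, and your base case matches it verbatim. The divergence is only in the successor case, where you correctly spot that $M.z.s\reaches s.x$ and $\sembed{k+1}.z.s\to s.\sembed k$ need not share a common reduct.

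Your proposed resolution, however, has a gap. The ``mild strengthening'' you want---if $M,N$ have arity $k$ and $M.\vec t\obseq N.\vec t$ for all $\vec t$, then $M\obseq N$---is \emph{not} immediate from the definition of $\obseq$. In an evaluation of $X.M$ the subterm $M$ may be duplicated and later reach head position several times with different argument tuples $\vec u$, and those $\vec u$ themselves may contain copies of $M$; you cannot simply ``unfold through the arity-$k$ applications''. A naive induction on evaluation length runs into the problem that when $M$ surfaces as $M.\vec u[\ldots M\ldots]$, applying your hypothesis requires first knowing $\vec u[\ldots M\ldots]\obseq\vec u[\ldots N\ldots]$, which is precisely the conclusion you are after. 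Establishing this strengthening, if it holds, would require reworking the substitution-and-induction argument behind Lemma~\ref{lem:common-ext-reduct-then-evfinal}, not a one-liner.

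The paper avoids all of this with a trick you overlooked: in the successor case, compare $M$ not with $S.\sembed k$ but with $S.x$. Since $M.z.s\reaches s.x$ and $S.x.z.s\to s.x$, the two sides share the \emph{literal} common reduct $s.x$, so Theorem~\ref{thm:converging-ext-then-equiv} applies directly and yields $M\obseq S.x$. Only then does the induction hypothesis $x\obseq\sembed k$ enter, via congruence (Lemma~\ref{lem:equiv-congruence}): $S.x\obseq S.\sembed k=\sembed{k+1}$, and transitivity finishes. No strengthened lemma is needed.
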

For further reasoning, it is useful to split up $\denot{\Naturals}$
in parts as follows.
\begin{defn}
\label{def:embeds-set}For a natural number $k$, the set $\embeds k$
is defined as $\{M\in\denot{\Naturals}|M\obseq\sembed k\}$. We define
$\embeds b$ and $\embeds l$ analogously for booleans $b$ and lists
of naturals $l$.
\end{defn}
With this definition, we may say $\mathit{Leq}.\sembed 3.\sembed 4\in\embedsm{True}$.
In fact, if $a\in\embeds 3$ and $b\in\embeds 4$, then $\mathit{Leq}.a.b\in\embedsm{True}$.%
\footnote{To see this, observe $\mathit{Leq}.a.b\obseq\mathit{Leq}.\sembed 3.\sembed 4\obseq\sembedm{True}$
by congruence, then use Theorems~\ref{thm:obseq-then-reaches-same-name}
and~\ref{thm:freshness}.%
} To support this pattern of reasoning, we allow to lift $\embeds{\cdot}$,
denoting a \emph{term}. The resulting statements are implicitly quantified
universally and existentially, and are usable in proof chains.
\begin{rem}
\label{rem:lifted-embeds}For data terms, we would like to reason
and compute with equivalence classes of representations, $\embeds k$,
instead of with the representations themselves, $\sembed k$. Of course,
a CC program will always compute with a term (and not with an equivalence
class of terms), but we would like this computation to only depend
on the characterization of the equivalence class. 

For example, we want to compute a CBN addition function $\mathit{AddCBN}$,
such that for all $m,p\in\Naturals$, $\forall t\in\embeds m\forall u\in\embeds p:\mathit{AddCBN}.t.u\in\embeds{m+p}$.
As a specification, we want to summarize this as: 
\[
\mathit{AddCBN}.\embeds m.\embeds p\in\embeds{m+p}
\]

\end{rem}
We will also summarize a statement of the form $\forall t_{1}\in\embeds m\;\exists t_{2}\in\embeds m\;\exists t_{3}\in\embeds l:A.t_{1}\reaches B.t_{2}.t_{3}$
with the shorthand $A.\embeds m\reaches B.\embeds m.\embeds l$. If
we know $A.\embeds m\reaches B.\embeds m.\embeds l$ and $B.\embeds m.\embeds l\reaches C.\embeds m$,
then we may logically conclude 
\[
\forall t_{1}\in\embeds m\;\exists t_{2}\in\embeds m\;\exists t_{3}\in\embeds l\;\exists t_{4}\in\embeds m:A.t_{1}\reaches B.t_{2}.t_{3}\reaches C.t_{4}\mbox{ ,}
\]
which we will summarize as $A.\embeds m\reaches B.\embeds m.\embeds l\reaches C.\embeds m$.
Analogous statements of this form, and longer series, will be summarized
in a similar way. In particular, it will suit us to also use $\rightarrow$
and $\conv$ in longer derivations.

\emph{}

\paragraph{Example: delayed addition}

We will program a different addition on natural numbers: one that
delays work as long as possible, like in call-by-name programming
languages. We use the following algorithm, for natural numbers $m,p$:
\[
\begin{aligned}0+p & =p\\
\Succ(m)+p & =\Succ(m+p)
\end{aligned}
\]

The resulting name $\mathit{AddCBN}$ will be a `call-by-name' function,
with specification $\mathit{AddCBN}.\embeds m.\embeds p\in\embeds{m+p}$,
so we have to build a rule for $\mathit{AddCBN}$. Because $\mathit{AddCBN}.\embeds m.\embeds p\in\denot{\Naturals}$,
$\arity{}{\mathit{AddCBN}}=4$. We reduce the specification with a
case distinction on the first argument. 
\[
\mathit{\begin{aligned}AddCBN.\embeds 0.\embeds p.z.s & \conv\embeds p.z.s, & \quad & \mbox{(}\mathit{AddCBN}.\embeds 0.\embeds p\mbox{ has the same specification as }\embeds p\mbox{)}\\
\mathit{AddCBN}.\embeds{\Succ(m)}.\embeds p.z.s & \reaches s.\embeds{m+p}
\end{aligned}
}
\]

We must make the case distinction by using the specified behavior
of the first argument. This suggests a rule of the form $\d{AddCBN.x.y.z.s}{x.(y.z.s).(s.(AddCBN.x'.y))}$.
It almost works:
\[
\begin{aligned}\mathit{AddCBN}.\embeds 0.\embeds p.z.s & \reaches\embeds p.z.s\\
\mathit{AddCBN}.\embeds{S(m)}.\embeds p.z.s & \reaches s.(AddCBN.x'.\embeds p).\embeds m
\end{aligned}
\]

However, variable $x'$ is not in the left-hand side, so this is not
a valid rule. Furthermore, if $x=\Succ(x')$, then $x'$ would be
erroneously appended to $\mathit{s.(AddCBN.x'.y)}$. We fix $\mathit{AddCBN}$
with a helper name $\mathit{AddCBN'}$, with specification $\mathit{AddCBN'}.\embeds p.s.\embeds m\reaches s.\embeds{m+p}$.
\[
\begin{aligned}\dsplit{AddCBN.x.y.z.s}{x.(y.z.s).(AddCBN'.y.s))}\\
\dsplit{AddCBN'.y.s.x'}{s.(AddCBN.x'.y)}
\end{aligned}
\]

This version conforms to the specification. 
\[
\begin{aligned}\mathit{AddCBN}.\embeds 0.\embeds p.z.s & \reaches\embeds p.z.s=\embeds p\\
\mathit{AddCBN}.\embeds{S(m)}.\embeds p.z.s & \reaches\mathit{AddCBN'}.\embeds p.s.\embeds m\\
 & \rightarrow s.(\mathit{AddCBN}.\embeds m.\embeds p)=s.\embeds{m+p}
\end{aligned}
\]

\emph{}

\begin{figure}
\begin{centering}
\begin{tabular}{>{\raggedright}p{0.48\textwidth}|>{\raggedright}p{0.48\textwidth}}
\centering{}\textbf{Call-by-value $\mathconst{fib}(7)$} & \centering{}\textbf{Call-by-name $\mathconst{fib}(7)$}\tabularnewline
\hline 
\vspace{-5bp}To apply $f$ to $\vec{x}$, evaluate $f.\vec{x}.r\reaches r.y$
for some $r$. Then $y$ is the result.  & \vspace{-5bp}To apply $f$ to $\vec{x}$, write $f.\vec{x}$. This
is directly a data term, no reduction happens.\tabularnewline
\vspace{-5bp}\emph{The result of $\mathconst{fib}(7)$ is 13, obtained
in 362 reduction steps:}

$\mathit{FibCBV}.7.\place\reaches\place.13$ & \vspace{-5bp}\emph{By the specification of $\mathit{FibCBN}$, we
know $\mathit{FibCBN}.7\in\embeds{13}$. No reduction is involved.}\tabularnewline
\multicolumn{2}{>{\centering}p{1\textwidth}}{\vspace{-2bp}Both $13$ and $\mathit{FibCBN}.7$ can be used in other
functions, like $+$. Because they are observationally equivalent,
they can be substituted for each other in a term. That does not affect
termination, or the head of the final term if that is undefined (Theorem~\ref{thm:obseq-then-reaches-same-name}).
However, substituting $13$ for $\mathit{FibCBN}.7$ may make the
evaluation shorter. \vspace{8bp}}\tabularnewline
\emph{$13+_{\text{CBV}}0$ is obtained in 41 steps:} & \emph{$\mathit{FibCBN}.7+_{\text{CBV}}0$ is obtained in 304 steps:}\tabularnewline
$\mathit{AddCBV}.13.0.\place\reaches\place.13\insteps{41}$ & $\mathit{AddCBV}.(\mathit{FibCBN}.7).0.\place\reaches\mathit{\place.13}\insteps{304}\mbox{ (263 more)}$\vspace{8bp}\tabularnewline
\multicolumn{2}{c}{Our implementation of $\mathit{AddCBV}$ does not examine the right
argument, as the converse addition shows.\vspace{8bp}}\tabularnewline
$\mathit{AddCBV.0.13.\place}\reaches\mathit{\place.13}\insteps 2$ & $\mathit{AddCBV.0.(FibCBN.7).\place}\reaches\mathit{\place.(FibCBN.7)}\insteps 2$\tabularnewline
\end{tabular}
\par\end{centering}

\caption{\label{fig:using-add-plus-cbv-cbn}Calculating $\mathconst{fib}(7)$,
$\mathconst{fib}(7)+0$, and $0+\mathconst{fib}(7)$ using call-by-value
and call-by-name. Effectively, $\mathit{FibCBN}$ delays computation
until it is needed. A natural number $n$ stands for $\smash{\protect\underbrace{S.(\cdots.(S}_{n\text{ times}}}.\mathit{Zero})\cdots)$.}
\end{figure}

\subsection{\label{sub:data-functions}Call-by-name and call-by-value functions}

We regard two kinds of functions. We call them \emph{call-by-name}
and \emph{call-by-value}, by analogy with the evaluation stategies
for lambda calculus. Figure~\ref{fig:definition-fib-add-cbv-cbn}
defines a CBN and CBV version of addition on naturals and the Fibonacci
function. Figure~\ref{fig:using-add-plus-cbv-cbn} shows how to use
them. It also illustrates that the CBV function performs work eagerly,
while the CBN function delays work until it is needed: hence the analogy.
\begin{itemize}
\item \emph{Call-by-name functions} are terms $f$ such that $f.v_{1}.\cdots.v_{k}$
is a data term for all $\vec{v}$ in a certain domain. Example specifications
for such $f$: 
\[
\begin{aligned}\mathit{AddCBN}.\embeds m.\embeds p & \in\embeds{m+p}\\
\mathit{FibCBN}.\embeds m.\embeds p & \in\embeds{\mathconst{fib}(m)}
\end{aligned}
\]

\item \emph{Call-by-value functions} are terms $f$ of arity $n+1$ such
that for all $\vec{v}$ in a certain domain, \\
$\forall r:f.v_{1}.\cdots.v_{n}.r\reaches r.t$ with data term $t$
depending only on $\vec{v}$, not on $r$. Example specifications
for such $f$: 
\[
\begin{aligned}\forall r:\mathit{AddCBV}.\embeds m.\embeds p.r & \reaches r.\embeds{m+p}\\
\forall r:\mathit{FibCBV}.\embeds m.r & \reaches r.\sembed{\mathconst{fib}(m)}
\end{aligned}
\]

The output of $\mathit{FibCBV}$ is always a standard representation.
 Because our implementation of $\mathit{AddCBV}$ does not inspect
the second argument, its output may not be a standard integer. An
example of this is shown in Figure~\ref{fig:using-add-plus-cbv-cbn}.

\end{itemize}
We leave formal proofs of the specifications for future work. 

\FloatBarrier

\section{\label{sec:listmult-modeling}Modeling programs with control}

To illustrate how control is fundamental to continuation calculus,
we give an example program that multiplies a list of natural numbers,
and show how an escape from a loop can be modeled without a special
operator in the natural CC representation. We use an ML-like programming
language for this example, and show the corresponding call-by-value
program for CC.

The naive way to compute the product of a list is as follows: 

\vspace{\abovedisplayskip}%
\begin{tabular}{>{\raggedright}p{0.4\textwidth}|>{\raggedleft}p{0.5\textwidth}}
$\begin{array}{l}
\Letrec\mathit{listmult}_{1}\; l=\Match l\\
\quad|\;[]\rightarrow1\\
\quad|\;\left(x:xs\right)\rightarrow x\cdot\mathit{listmult}_{1}\; xs
\end{array}$ & $\begin{aligned}\dsplit{ListMult.l.r}{l.(r.(S.Zero)).(C.r)}\\
\dsplit{C.r.x.xs}{ListMult.xs.(PostMult.x.r)}\\
\dsplit{PostMult.x.r.y}{Mult.x.y.r}
\end{aligned}
$\tabularnewline
\end{tabular}\vspace{\belowdisplayskip}

\noindent Note that if $l$ contains a zero, then the result is always
zero. One might wish for a more efficient version that skips all numbers
after zero.

\vspace{\abovedisplayskip}%
\begin{tabular}{>{\raggedright}p{0.4\textwidth}|>{\raggedleft}p{0.5\textwidth}}
$\begin{array}{l}
\Letrec\mathit{listmult}_{2}\; l=\Match l\\
\quad|\;[]\rightarrow1\\
\quad|\;\left(x:xs\right)\rightarrow\Match x\\
\quad\qquad|\;0\rightarrow0\\
\quad\qquad|\; x'+1\rightarrow x\cdot\mathit{listmult}_{2}\; l
\end{array}$ & $\begin{aligned}\dsplit{ListMult.l.r}{l.(r.(S.Zero)).(B.r)}\\
\dsplit{B.r.x.xs}{x.(r.Zero).(C.r.x.xs)}\\
\dsplit{C.r.x.xs.x'}{ListMult.xs.(PostMult.x.r)}\\
\dsplit{PostMult.x.r.y}{Mult.x.y.r}
\end{aligned}
$\tabularnewline
\end{tabular}\vspace{\belowdisplayskip}

\noindent However, $\mathit{listmult}_{2}$ is not so efficient either:
if the list is of the form $[x_{1}+1,\cdots,x_{k}+1,0]$, then we
only avoid multiplying $0\cdot\mathit{listmult}_{2}\,[]$. The other
multiplications are all of the form $n\cdot0=0$. We also want to
avoid execution of those surrounding multiplications. We can do so
if we extend ML with the $\Callcc$ operator, which creates alternative
exit points that are invokable as a function.

\vspace{\abovedisplayskip}%
\begin{tabular}{cc>{\raggedright}m{9em}}
$\begin{array}{l}
\Let\mathit{listmult}_{3}\: l=\\
\quad\Callcc\;\big(\lambda\mathit{abort}.\\
\qquad A\; l\\
\qquad\Where A=\mathsf{function}\\
\qquad\quad|\;[]\rightarrow1\\
\qquad\quad|\;\left(x:xs\right)\rightarrow\fbox{\negmedspace\negmedspace\ensuremath{\begin{array}[t]{l}
 \Match x\\
\quad|\;0\rightarrow\mathit{abort}\;0\\
\quad|\;\ x'+1\rightarrow\fbox{\ensuremath{x\cdot A\ xs}}_{(C)} 
\end{array}}\negmedspace\negmedspace}_{\mathrlap{(B)}}\:)
\end{array}$ & \qquad{} & \emph{\footnotesize The boxes are not syntax, but are used to relate
$\mathit{listmult}{}_{3}$ to Figure~\ref{fig:listmult}.}\emph{ }\tabularnewline
\end{tabular}\vspace{\belowdisplayskip}

\medskip{}

While $\mathit{listmult}_{3}$ is not readily expressible in actual
ML or lambda calculus, it is natural to express in~CC: we list the
program in Figure~\ref{fig:listmult}.

These programs are a CPS translation of $\mathit{listmult}_{3}$,
with one exception: the variable $\mathit{abort}$ in Figure~\ref{fig:listmult}
corresponds to the partial application of $\mathit{abort}$ to 0 in
$\mathit{listmult}_{3}$. Note that in CC, $\mathit{abort}$ is obtained
simply by constructing $\mathit{r.Zero}$. The variable $r$ globally
corresponds to the return continuation that is implicit in ML. Continuation
calculus requires to explicitly thread variables through the continuations.

\begin{figure}[tbh]

\noindent \begin{centering}
\makebox[1\textwidth]{%
\global\long\def\listmultskip{\vspace{7bp}}
\begin{tabular}[t]{>{\raggedright}p{0.55\linewidth}>{\raggedright}p{0.45\linewidth}}
\textbf{Continuation calculus} & \textbf{Haskell equivalent}\listmultskip\tabularnewline
\textemdash{} \emph{Assume} $m,m',p\in\mathbb{N}$, $l\in\mathconst{List}_{\mathbb{N}}$,
$r,r_{0}\in\Univ$.

$\d{ListMult.xs.r}{A.xs.r.(r.Zero)}$

\textbf{Theorem.}\quad{}$\mathit{ListMult.\embeds l.r}\reaches r.\embeds{\Product l}$\listmultskip

\textemdash{} Assume $r.\mathit{Zero}\conv r_{0}$.

$\d{A.xs.r.abort}{xs.(r.(S.Zero)).(B.r.abort)}$

\textbf{Lemma.}\quad{}$A.\embeds l.r.r_{0}\conv r.\embeds{\Product l}$\listmultskip

$\d{B.r.abort.x.xs}{x.abort.(C.r.abort.x.xs)}$

$\Longrightarrow\quad B.r.r_{0}.\embeds m.\embeds l\conv r.\embeds{m\cdot\Product l}$\listmultskip

$\d{C.r.abort.x.xs.x'}{A.xs.(PostMult.x.r).abort}$

$\Longrightarrow\quad C.r.r_{0}.\embeds m.\embeds l.x'\conv r.\embeds{m\cdot\Product l}$\listmultskip

$\d{PostMult.x.r.y}{Mult.x.y.r}$

$\Longrightarrow\quad\mathit{PostMult}.\embeds m.r.\embeds p\reaches r.\embeds{m\cdot p}$\listmultskip

$\d{Mult.x.y.r}{y.(r.Zero).(PostMult.x.(PostAdd.x.r))}$

\textbf{Assumption.\quad{}}$\mathit{Mult.\embeds m.\embeds p.r\reaches r.\embeds{m\cdot p}}$\listmultskip

\textbf{Usage.}\quad{}$\mathit{ListMult}.\embeds{[3,1,2]}.r\reaches r.\embeds 6$  & \textemdash{} \emph{Assume} $l,xs\in[\mathbb{N}]$, $x,x',y\in\mathbb{N}$.

$\mathit{listmult}_{4}\; l\; r=A\; l\; r\;\left(r\;0\right)$

$\Longrightarrow\quad\mathit{listmult}_{4}\; l\; r=r\;\left(\Product l\right)$\listmultskip

$\mathit{A\; l\; r\; abort}=\Caseof l$

$\quad|\;[]\rightarrow r\;1$

$\quad|\; x:xs\rightarrow\mathit{B\; r\; abort\; x\; xs}$

$\Longrightarrow\quad A\; l\; r\;\left(r\;0\right)=r\;\left(\Product l\right)$\listmultskip

$\mathit{B\; r\; abort\; x\; xs}=\Caseof x$

$\quad|\;0\rightarrow\mathit{abort}$

$\quad|\; y+1\rightarrow\mathit{C\; r\; abort\; x\; xs\; y}$

$\Longrightarrow\quad\mathit{B\; r\;\left(r\;0\right)\; x\; xs}=r\;\left(x\cdot\Product\mathit{xs}\right)$\listmultskip

$\mathit{C\; r\; abort\; x\; xs\; x'}=\mathit{A\; xs\;(PostMult\; x\; r)\; abort}$

$\Longrightarrow\quad\mathit{C\; r\;\left(r\;0\right)\; x\; xs\; x'}=r\;\left(x\cdot\Product\mathit{xs}\right)$\listmultskip

$\mathit{PostMult\; x\; r\; y}=r\;\left(x\cdot y\right)$\listmultskip

\textbf{Usage.\quad{}}$6==\mathit{listmult}_{4}\;[3,1,2]\;\mathit{id}$\tabularnewline
\end{tabular}%
}
\par\end{centering}

\caption{\label{fig:listmult}Left: `fast' list multiplication in continuation
calculus (CC). Right: Haskell program with equivalent semantics. Statements
after $\Longrightarrow$ serve to guide the reader. The theorem and
lemma are proven in Section~\ref{sec:listmult-correctness}.}
\end{figure}

\section{\label{sec:listmult-correctness}Correctness of ListMult}

\begin{multiplyingdisplayskipsby}{.4}{2.5}

This section proves that $\mathit{ListMult}$ in Figure~\ref{fig:listmult}
is correct. The idea is to assume that a program contains the listed
definitions, and $\mathit{Mult}$ behaves according to the specification;
then Theorem~\ref{thm:listmult-correctness} proves the specification
of $\mathit{ListMult}$ in that program.

We need two lemmas. Firstly, we show that name $A$ conforms to its
specification. This is done by induction on list $l$. Furthermore,
we need a lemma on the quick exit of $\mathit{PostMult}$.
\begin{lem}
The specification of $A$ is satisfied. That is, assume $l\in\mathconst{List}_{\mathbb{N}}$,
$r,r_{0}\in\Univ$ such that $r.\sembed 0\conv r_{0}$. Then $A.\embeds l.r.r_{0}\conv r.\embeds{\Product l}$. \end{lem}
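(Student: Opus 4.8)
The plan is to prove the statement by induction on the list $l$, reading it throughout via the conventions of Remark~\ref{rem:lifted-embeds}: the assertion $A.\embeds l.r.r_{0}\conv r.\embeds{\Product l}$ abbreviates ``for every $t\in\embeds l$ there is a $u\in\embeds{\Product l}$ with $A.t.r.r_{0}\conv r.u$''. I will use freely that $\conv$ is reflexive, symmetric, transitive, and that it contains $\reaches$; transitivity holds because $\rightarrow_{P}$ is a partial function, so the reduction path of any term is unique. So fix $t\in\embeds l$.

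For the base case $l=[]$, the term $t\in\embeds{[]}\subseteq\denot{\ListNats}$ behaves as $\sembed{[]}$, i.e. $t.e.c\reaches e$ for all $e,c$ (if it fell in the other alternative of $\denot{\ListNats}$, then $t.e.c\reaches c.x.xs$ for fresh $e,c$, and Theorem~\ref{thm:obseq-then-reaches-same-name} with Lemma~\ref{lem:determinism} applied to $\sembed{[]}.e.c$ would give a contradiction). Then, unfolding the rule for $A$ and this behaviour of $t$, $A.t.r.r_{0}\rightarrow t.(r.(S.Zero)).(B.r.r_{0})\reaches r.(S.Zero)$, and since $\Product[]=1$ and $S.Zero=\sembed 1\in\embeds 1$ this is the required conclusion; the hypothesis on $r_{0}$ is not needed here.

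For the inductive step $l=m:l'$, a similar argument shows $t$ now behaves as $\sembed{m:l'}$, so $t.e.c\reaches c.x.xs$ for all $e,c$, with $x\in\embeds m$ and $xs\in\embeds{l'}$ (the components lie in these classes by the characterisation of $\denot{\ListNats}$ and the conventions of Remark~\ref{rem:lifted-embeds}). Unfolding the rules for $A$ and $B$ gives $A.t.r.r_{0}\reaches B.r.r_{0}.x.xs\rightarrow x.r_{0}.(C.r.r_{0}.x.xs)$, and I would split on $m$ using the behaviour of $x$. If $m=0$, then $x$ behaves as $\sembed 0$ and the term reduces to $r_{0}$; with the hypothesis $r.\sembed 0\conv r_{0}$ and transitivity we get $A.t.r.r_{0}\conv r.\sembed 0$, which suffices since $\Product(0:l')=0$ and $\sembed 0\in\embeds 0$. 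If $m=m''+1$, then $x$ behaves as $\sembed{m''+1}$, so $x.r_{0}.(C.r.r_{0}.x.xs)\reaches (C.r.r_{0}.x.xs).x'$ for some $x'$, and the rule for $C$ gives $A.t.r.r_{0}\reaches A.xs.(\mathit{PostMult}.x.r).r_{0}$.

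The remaining work is to invoke the induction hypothesis on $l'$ with return continuation $r'=\mathit{PostMult}.x.r$ and the same $r_{0}$, and verifying its side condition $r'.\sembed 0\conv r_{0}$ is the one delicate point. Using the rules for $\mathit{PostMult}$, $\mathit{Mult}$ and $\mathit{Zero}$ (Figure~\ref{fig:listmult}), $\mathit{PostMult}.x.r.\sembed 0\rightarrow\mathit{Mult}.x.\sembed 0.r\rightarrow\sembed 0.(r.Zero).(\cdots)\reaches r.Zero=r.\sembed 0\conv r_{0}$ --- this is exactly where the behaviour of $\mathit{Mult}$ on a zero second argument enters, and it is the only place where having the \emph{definition} of $\mathit{Mult}$ rather than merely its specification is convenient (with only the specification one instead gets $\mathit{Mult}.x.\sembed 0.r\reaches r.c$ with $c\in\embeds 0$ and must route through $\obseq$). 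The induction hypothesis then yields, for some $w\in\embeds{\Product l'}$, $A.xs.(\mathit{PostMult}.x.r).r_{0}\conv (\mathit{PostMult}.x.r).w\rightarrow\mathit{Mult}.x.w.r\reaches r.c'$, where the last step uses the specification of $\mathit{Mult}$ on $x\in\embeds{m''+1}$ and $w\in\embeds{\Product l'}$, so that $c'\in\embeds{(m''+1)\cdot\Product l'}$ and $(m''+1)\cdot\Product l'=\Product l$. Chaining everything through transitivity of $\conv$ gives $A.t.r.r_{0}\conv r.c'$ with $c'\in\embeds{\Product l}$, closing the induction. The main obstacle is precisely this threading of return continuations and the check of the side condition before the recursive call; the rest is routine unfolding of rules together with the data-term characterisations.
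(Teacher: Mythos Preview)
Your proof is correct and follows essentially the same approach as the paper: induction on $l$ with the same three-way case analysis (empty list, head zero, head a successor), the same unfolding of the rules for $A$, $B$, $C$, and the same crucial check that the side condition $r'.\sembed 0\conv r_{0}$ holds for $r'=\mathit{PostMult}.x.r$ before invoking the induction hypothesis. The only differences are organisational: the paper presents the three cases at the top level rather than as a base case plus a sub-split on $m$, and it factors out your verification of $\mathit{PostMult}.x.r.\sembed 0\conv r_{0}$ as a separate lemma (Lemma~\ref{lem:postmult-to-rzero}), whereas you inline that computation.
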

\begin{proof}
We use induction on $l$, and make a three-way case distinction. 
\begin{caseenv}
\item Base case: $l=[]$. Then:
\[
\begin{array}{clcl}
 & \mathit{A.\embeds{[]}.r.r_{0}} & \quad\\
\rightarrow & \mathit{\embeds{[]}.\left(r.\left(S.Zero\right)\right).(B.r.r_{0})} &  & \mbox{by definition}\\
\reaches & \mathit{r.\left(S.Zero\right)} &  & \mbox{by definition of }\embeds{[]}\\
= & r.\embeds{\Product[]} &  & \mathit{S.Zero}\in\embeds 1=\embeds{\Product[]}
\end{array}
\]

\item $l=\left(0:l'\right)$. Then: 
\[
\begin{array}{llcl}
 & \mathit{A.\embeds{0:l'}.r.r_{0}} & \quad\\
\rightarrow & \mathit{\embeds{0:l'}.\left(r.\left(S.Zero\right)\right).(B.r.r_{0})} &  & \mbox{by definition of }A\\
\reaches & \mathit{B.r.r_{0}.\embeds 0.\embeds{l'}} &  & \mbox{by definition of }\embeds{0:l'}\\
\rightarrow & \mathit{\embeds 0.r_{0}.(C.r.r_{0}.\embeds 0.\embeds{l'})} &  & \mbox{by definition of }B\\
\reaches & r_{0} &  & \mbox{by definition of }\embeds 0\\
\conv & r.\mathit{Zero} &  & \mbox{by assumption}\\
= & r.\embeds{\Product\left(0:l'\right)} &  & \mathit{Zero}\in\embeds 0=\embeds{\Product\left(0:l'\right)}
\end{array}
\]

\item $l=\left(m+1:l'\right)$. Then:
\[
\begin{array}{llcl}
 & \mathit{A.\embeds{m+1:l'}.r.r_{0}} & \quad\\
\rightarrow & \mathit{\embeds{m+1:l'}.\left(r.\left(S.Zero\right)\right).(B.r.r_{0})} &  & \mbox{by definition of }A\\
\reaches & \mathit{B.r.r_{0}.\embeds{m+1}.\embeds{l'}} &  & \mbox{by definition of }\embeds{m+1:l'}\\
\rightarrow & \mathit{\embeds{m+1}.r_{0}.(C.r.r_{0}.\embeds{m+1}.\embeds{l'})} &  & \mbox{by definition of }B\\
\reaches & \mathit{C.r.r_{0}.\embeds{m+1}.\embeds{l'}.\embeds m} &  & \mbox{by definition of }\embeds{m+1}\\
\rightarrow & \mathit{A.\embeds{l'}.\left(PostMult.\embeds{m+1}.r\right).r_{0}} &  & \mbox{by definition of }C\\
\conv & \mathit{PostMult.\embeds{m+1}.r.\embeds{\Product l'}} &  & \mbox{by induction if }r_{0}\conv\mathit{PostMult}.\embeds{m+1}.r.\sembed 0\\
\rightarrow & \mathit{Mult.\embeds{m+1}.\embeds{\Product l'}.r} &  & \mbox{by definition of }\mathit{Postmult}\\
\reaches & r.\embeds{\left(m+1\right)\cdot\Product l'} &  & \mbox{spec }\mathit{Mult}\\
= & r.\sembed{\Product\left(m+1:l'\right)} &  & \mbox{mathematics}
\end{array}
\]

This chain proves that $A.\embeds{m+1:l'}.r.r_{0}\conv r.\embeds{\Product\left(m+1:l'\right)}$. 

\end{caseenv}

\noindent The third case requires Lemma~\ref{lem:postmult-to-rzero},
which is proved below. This completes the induction, yielding: 
\[
A.\embeds l.r.r_{0}\conv r.\embeds{\Product l}\mbox{ for all }l\in\ListNats,r\in\Univ\mbox{ .}\myqedhere
\]

\end{proof}
\begin{lem}
\label{lem:postmult-to-rzero}Let $x\in\Univ$, $l\in\mathconst{List}_{\mathbb{N}}$,
$r,r_{0}\in\Univ$ and $r.\sembed 0\conv r_{0}$. Then $\mathit{PostMult.x.r.\sembed 0\conv r_{0}}$.\end{lem}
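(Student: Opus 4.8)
The plan is to unfold the rules for $\mathit{PostMult}$, $\mathit{Mult}$, and $\mathit{Zero}$, observe that $\mathit{PostMult}.x.r.\sembed 0$ reaches $r.\sembed 0$ in three steps, and then invoke the hypothesis $r.\sembed 0\conv r_0$. The assumptions $x\in\Univ$ and $l\in\ListNats$ turn out to be irrelevant (nothing inspects $x$, and $l$ does not occur), so the lemma is really just a short computation.

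First I would verify the reduction chain. By the rule $\mathit{PostMult}.x.r.y\rightarrow\mathit{Mult}.x.y.r$ we get $\mathit{PostMult}.x.r.\sembed 0\rightarrow\mathit{Mult}.x.\sembed 0.r$. By the rule $\mathit{Mult}.x.y.r\rightarrow y.(r.\mathit{Zero}).(\mathit{PostMult}.x.(\mathit{PostAdd}.x.r))$ we then get $\mathit{Mult}.x.\sembed 0.r\rightarrow\sembed 0.(r.\mathit{Zero}).(\mathit{PostMult}.x.(\mathit{PostAdd}.x.r))$. Finally, since $\sembed 0=\mathit{Zero}$ and the standard rule for $\mathit{Zero}$ is $\mathit{Zero}.z.s\rightarrow z$, the latter term reduces to $r.\mathit{Zero}$, which is syntactically the same as $r.\sembed 0$. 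Altogether $\mathit{PostMult}.x.r.\sembed 0\reaches r.\sembed 0$.

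Next I would glue this onto the hypothesis. By assumption $r.\sembed 0\conv r_0$, so there is a term $t$ with $r.\sembed 0\reaches t\backreaches r_0$. Prepending the chain above gives $\mathit{PostMult}.x.r.\sembed 0\reaches r.\sembed 0\reaches t\backreaches r_0$, hence $\mathit{PostMult}.x.r.\sembed 0\reaches t\backreaches r_0$, which is precisely $\mathit{PostMult}.x.r.\sembed 0\conv r_0$.

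I expect no genuine obstacle. The only points worth a second glance are that each of the three steps is a legal evaluation step in the sense of Definition~\ref{def:evaluation} — the length of each term matches the arity of its head ($\mathit{PostMult}$ and $\mathit{Mult}$ have arity $3$, $\mathit{Zero}$ has arity $2$) — and that the occurrence of $\mathit{Zero}$ produced by the $\mathit{Mult}$ rule is literally the data term $\sembed 0$, so that the chain lands exactly on the term appearing in the hypothesis with no extra manipulation. This same computation, incidentally, is what case~3 of the preceding lemma's proof invokes when it assumes $r_0\conv\mathit{PostMult}.\embeds{m+1}.r.\sembed 0$.
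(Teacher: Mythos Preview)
Your proposal is correct and matches the paper's proof essentially step for step: the paper displays the same three-step reduction $\mathit{PostMult}.x.r.\sembed 0\rightarrow\mathit{Mult}.x.\sembed 0.r\rightarrow\sembed 0.(r.\mathit{Zero}).(\mathit{PostMult}.x.(\mathit{PostAdd}.x.r))\rightarrow r.\mathit{Zero}=r.\sembed 0$ and then appeals to the assumption $r.\sembed 0\conv r_0$. Your additional remarks on arities and the irrelevance of $l$ are accurate but not needed for the argument.
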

\begin{proof}
By the following chain. 
\[
\begin{array}{clcl}
 & \mathit{PostMult}.x.r.\sembed 0 & \quad\\
\rightarrow & \mathit{Mult}.x.\sembed 0.r &  & \mbox{by definition of }\mbox{\ensuremath{\mathit{PostMult}}}\\
\rightarrow & \sembed 0.\left(r.\mathit{Zero}\right).\left(\mathit{PostMult}.x.\left(\mathit{PostAdd}.x.r\right)\right) &  & \mbox{by definition of }\mathit{Mult}\\
\rightarrow & r.\mathit{Zero}=r.\sembed 0 &  & \mbox{by definition of }\sembed 0\\
\conv & r_{0} &  & \mbox{by assumption}\myqedhere
\end{array}
\]
\end{proof}
\begin{thm}
\label{thm:listmult-correctness}The specification of $\mathit{ListMult}$
is satisfied. That is: assume $l\in\mathconst{List}_{\mathbb{N}}$,
$r\in\Univ$. Then $\mathit{ListMult}.\embeds l.r\reaches r.\embeds{\Product l}$. \end{thm}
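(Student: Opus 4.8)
The plan is to peel off the single defining step of $\mathit{ListMult}$, invoke the already-proven specification of $A$, and then bridge the gap between common reduct and reduction by passing through a fresh return continuation. Unfolding the lifted notation, one fixes $t_{1}\in\embeds l$ and must exhibit $t_{2}\in\embeds{\Product l}$ with $\mathit{ListMult}.t_{1}.r\reaches r.t_{2}$ for every $r\in\Univ$; by Theorem~\ref{thm:freshness} it suffices to do this with $r$ replaced by a fresh name $\place$, since the representative of $\embeds{\Product l}$ that the $A$-lemma returns is a data term depending only on $l$ and not on the continuation, hence $\place$-free.

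So let $\place$ be fresh (for $P$, $l$, and $t_{1}$). Because $\arity{}{\mathit{ListMult}}=2$ and the rule is $\d{ListMult.xs.r}{A.xs.r.(r.Zero)}$, there is a single step $\mathit{ListMult}.\embeds l.\place\rightarrow A.\embeds l.\place.(\place.\mathit{Zero})$. Now apply the lemma characterising $A$ with $r:=\place$ and $r_{0}:=\place.\mathit{Zero}$: its hypothesis $\place.\sembed 0\conv r_{0}$ holds trivially, because $\place.\sembed 0$ is literally $\place.\mathit{Zero}=r_{0}$ (reflexivity, zero steps). This gives $A.\embeds l.\place.(\place.\mathit{Zero})\conv \place.\embeds{\Product l}$, and therefore $\mathit{ListMult}.\embeds l.\place\conv \place.\embeds{\Product l}$.

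To turn the common reduct into an honest reduction, note that $\place\notin\dom P$, so $\place.\embeds{\Product l}$ is undefined and hence final; Proposition~\ref{prp:converging-final-then-reaches} then yields $\mathit{ListMult}.\embeds l.\place\reaches \place.\embeds{\Product l}$. Finally, since $\place$ occurs neither in $t_{1}$ nor in the witness $t_{2}$, Theorem~\ref{thm:freshness} lets us substitute $\place:=r$ and conclude $\mathit{ListMult}.\embeds l.r\reaches r.\embeds{\Product l}$ for all $r\in\Univ$, which is the specification.

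I expect the only delicate point to be the fresh-name bookkeeping: one must be certain that the representative of $\embeds{\Product l}$ produced by the $A$-lemma does not mention $\place$, so that both $\place.\embeds{\Product l}\fin$ and the closing application of Theorem~\ref{thm:freshness} are legitimate. This is exactly the call-by-value discipline already built into the specifications of $A$ and $\mathit{Mult}$ (the returned value never depends on the return continuation); granting that, the argument reduces to one reduction step together with the two cited results from Section~\ref{sec:reasoning}.
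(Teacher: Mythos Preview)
Your proposal is correct and follows essentially the same route as the paper: unfold the one defining step of $\mathit{ListMult}$, invoke the $A$-lemma with $r_{0}=r.\mathit{Zero}$ (using $\sembed 0=\mathit{Zero}$), specialise $r$ to a fresh name so that Proposition~\ref{prp:converging-final-then-reaches} upgrades $\conv$ to $\reaches$, and then generalise back via Theorem~\ref{thm:freshness}. The only difference is cosmetic ordering (you unfold $\mathit{ListMult}$ before applying the $A$-lemma, the paper after), and your closing paragraph makes explicit the freshness bookkeeping that the paper leaves implicit.
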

\begin{proof}
We fill in $r_{0}=r.\mathit{Zero}$ in the specification of $A$;
then $r.\mathit{Zero}\conv r.\sembed 0$ is satisfied by definition
of $\sembed 0$. 
\[
A.\embeds l.r.\left(r.\mathit{Zero}\right)\conv r.\embeds{\Product l}\qquad\mbox{for all }l\in\ListNats,r\in\Univ
\]
 If we temporarily take $r$ to be a fresh name, then we can change
$\conv$ into $\reaches$ with Proposition~\ref{prp:converging-final-then-reaches}.
\[
A.\embeds l.r.\left(r.\mathit{Zero}\right)\reaches r.\embeds{\Product l}\qquad\mbox{for all }l\in\ListNats
\]
We can generalize this again with Theorem~\ref{thm:freshness}: 
\[
A.\embeds l.r.\left(r.\mathit{Zero}\right)\reaches r.\embeds{\Product l}\qquad\mbox{for all }l\in\ListNats,r\in\Univ
\]
 Now our main correctness result follows rather straightforwardly.
\[
\begin{array}[b]{clcl}
 & \mathit{ListMult}.\embeds l.r & \quad\\
\rightarrow & A.\embeds l.r.\left(r.\mathit{Zero}\right) &  & \mbox{by definition}\\
\reaches & r.\embeds{\Product l} &  & \mbox{as just shown}
\end{array}\myqedhere
\]

\end{proof}
\end{multiplyingdisplayskipsby}

\section{Conclusions and Future work}

We have defined a deterministic calculus that is suitable for modeling
programs with control. The calculus has simple operational semantics,
and can model both call-by-value and call-by-name programs in such
a way that CBN and CBV subprograms can be combined. Call-by-push-value~\cite{Levy99}
is another calculus in which CBV and CBN lambda calculus can be embedded.
We believe that CBV and CBN subprograms can be meaningfully combined
in call-by-push-value, but we have not found this in the literature.

In the present paper, we have not yet exploited types. In the future,
we will develop a typed version of continuation calculus, which also
guarantees the termination of well-typed terms. Another way to look
at types is by giving a standard representation of data terms as terms
in continuation calculus. In this paper, we have shown how to do this
for booleans, natural numbers and lists; in future work we will extend
this to other (algebraic, higher order, \ldots{}) data types. Also,
we will develop a generic procedure to transform functions that are
defined by pattern matching and equations into terms of continuation
calculus.

The determinism in continuation calculus suggests that we can model
assignment and side effects using a small number of extra names with
special reduction rules. However, such an extension may not preserve
observational equivalence. We want to examine if an extension provides
a pragmatic model for imperative-functional garbage-collected languages,
such as OCaml.

\bibliographystyle{eptcs}
\bibliography{bib}
\clearpage{}

\appendix

\section{Proofs}

We first prove the theorems in Section~\ref{sub:theorems-general},
then those in Section~\ref{sub:theorems-program-substitution-union},
and finally those in Section~\ref{sub:theorems-equivalence}. The
theorems within a subsection are not proved in order, and are interspersed
with lemmas.

\subsection{General}
\begin{prop}
\label{prp:fresh-reducts}Let name $\place$ be not mentioned in term
$M$ and program $P$. Then $\place$ is not mentioned in any reduct
of $M$.\end{prop}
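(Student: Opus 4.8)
The plan is to reduce to the one-step case and then induct on the length of the reduction. So I would first establish the following: if $\place$ is mentioned in neither $M$ nor $P$ and $M\rightarrow_P N$, then $\place$ is not mentioned in $N$. Granting this, since $\reaches_P$ is the reflexive–transitive closure of $\rightarrow_P$, any reduct $N$ of $M$ is obtained by a finite chain $M=M_0\rightarrow_P M_1\rightarrow_P\cdots\rightarrow_P M_j=N$, and a trivial induction on $j$ — noting that $P$ is fixed along the chain, so $\place$ stays unmentioned in $P$ — propagates "$\place$ not mentioned in $M_i$" from $i$ to $i+1$, giving the claim for $N=M_j$.

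For the one-step case I would unfold Definition~\ref{def:evaluation}: $M\rightarrow_P N$ means $M=n.t_1.\cdots.t_k$ where $P$ defines $n$ by a rule ``$\d{n.x_1.\cdots.x_k}{r}$'' and $N=r[\vec x:=\vec t]$. Since $\place$ is not mentioned in $M=n.t_1.\cdots.t_k$, it is mentioned in none of the $t_i$ (and $\place\neq n$, though we will not even need this). Since $\place$ is not mentioned in $P$ and the rule ``$\d{n.x_1.\cdots.x_k}{r}$'' is a rule of $P$, the name $\place$ is not mentioned in its right-hand side $r$. It then remains to observe that every name occurring in $r[\vec x:=\vec t]$ occurs either in $r$ itself or in one of the substituted terms $t_i$. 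This is immediate by structural induction on $r$ using the $\RHS$ grammar: at a name leaf the name is retained unchanged; at a variable leaf $x_i$ it is replaced by $t_i$, contributing only names of $t_i$; at a dot node $\RHS.\RHS$ the set of occurring names is the union of those in the two subterms, and the induction hypothesis applies to each. Hence $\place$, occurring neither in $r$ nor in any $t_i$, does not occur in $N=r[\vec x:=\vec t]$.

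The whole argument is routine bookkeeping; the only step worth stating carefully is the structural-induction fact that a substitution introduces no names beyond those of the template $r$ and of the arguments $\vec t$, since that is precisely what makes freshness a reduction invariant. I do not anticipate any real obstacle.
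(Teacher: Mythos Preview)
Your proposal is correct and is precisely the approach the paper takes: the paper's own proof is the one-liner ``By induction and by definition of $\next{}M$,'' and you have simply spelled out that induction and that unfolding in full detail. The structural-induction observation about substitution introducing no new names is exactly the content hidden in ``by definition of $\next{}M$.''
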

\begin{proof}
By induction and by definition of $\next{}M$.
\end{proof}

\begin{thm}
Let $M,N\in\Univ$, $P$ a program, and $\place$ a name not mentioned
in $P$. The following equivalences hold: 
\[
\begin{array}{rclcr}
M\rightarrow N & \Longleftrightarrow & \forall t\in\Univ:M[\place:=t]\rightarrow N[\place:=t] & \qquad & (1)\\
M\fin & \Longleftrightarrow & \exists t\in\Univ:M[\place:=t]\fin &  & (2)\\
M\reaches N & \Longleftrightarrow & \forall t\in\Univ:M[\place:=t]\reaches N[\place:=t] &  & (3)\\
M\evfinal & \Longleftrightarrow & \exists t\in\Univ:M[\place:=t]\evfinal &  & (4)
\end{array}
\]

\end{thm}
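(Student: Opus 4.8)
The plan is to prove part~(1) directly and obtain the other three parts from it. For the forward direction of~(1), suppose $M\rightarrow N$. Then $M$ is not final, so $\head M\in\dom P$, and since $\place$ is fresh this forces $\head M\ne\place$; write $M=n.M_1.\cdots.M_k$ with $n\ne\place$ and let $n.x_1.\cdots.x_k\to r$ be the defining rule of $n$ in $P$, so $N=r[\vec x:=\vec M]$. Because $n\ne\place$, $M[\place:=t]=n.M_1[\place:=t].\cdots.M_k[\place:=t]$ still matches that rule, so $M[\place:=t]\rightarrow r[x_1:=M_1[\place:=t],\ldots,x_k:=M_k[\place:=t]]$; since the fresh name $\place$ does not occur in $r$, a routine substitution-commutation argument identifies this reduct with $(r[\vec x:=\vec M])[\place:=t]=N[\place:=t]$. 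The backward direction of~(1) is immediate: instantiate the universally quantified $t$ with $\place$, for which $[\place:=\place]$ is the identity substitution. That same instantiation also gives at once the forward directions of~(2), (3) and~(4).

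Two of the remaining implications are short. For~(3) forward I would induct on the length of the reduction $M\reaches N$: the base case is reflexivity of $\reaches$, and for $M\rightarrow M'\reaches N$ I combine~(1) with the induction hypothesis. For~(2) backward I use the contrapositive of~(1) forward: if $M$ is not final then $M\rightarrow\next PM$, hence $M[\place:=t]\rightarrow(\next PM)[\place:=t]$ for every $t$, so no $M[\place:=t]$ is final either.

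The one genuinely non-trivial part is the backward direction of~(4): from $M[\place:=t]\evfinal$ for a single $t$, conclude $M\evfinal$. I would fix a reduction $M[\place:=t]\reaches K$ with $K$ final and induct on its length. If $M$ is itself final we are done. Otherwise $M\rightarrow N$ with $N=\next PM$ (here $\head M\ne\place$ comes for free, since a head in $\dom P$ cannot be $\place$), so by~(1) forward $M[\place:=t]\rightarrow N[\place:=t]$; as $M[\place:=t]$ is not final but $K$ is, the chosen reduction has positive length, and because $\rightarrow_P$ is the graph of the partial function $\next P{\cdot}$ (Definition~\ref{def:evaluation}), its first step is necessarily $M[\place:=t]\rightarrow N[\place:=t]$. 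Hence $N[\place:=t]$ reaches $K$ in strictly fewer steps, the induction hypothesis gives $N\evfinal$, and from $M\rightarrow N$ together with $N\evfinal$ we get $M\evfinal$.

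I expect this last induction to be the main obstacle, and its content is conceptual rather than computational: reducing $M[\place:=t]$ might in principle ``consume'' the inserted subterm $t$, but since CC uses only head reduction and $\place\notin\dom P$, every head step of $M[\place:=t]$ fires at a name lying in $\dom P$ --- hence not at an occurrence of $\place$ --- and is therefore the $[\place:=t]$-image of an honest head step of $M$; determinism of $\next P{\cdot}$ is what lets me conclude that this first step is forced and so run the induction.
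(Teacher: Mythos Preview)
Your argument is essentially the paper's, but there is one slip to fix: instantiating $t:=\place$ gives the \emph{backward} direction of~(3), not the forward one. The right-hand side of~(3) is a universal quantifier, so plugging in a single witness only helps going right-to-left. Since you do prove the forward direction of~(3) separately by induction, this is surely a typo, but as written the backward direction of~(3) is never addressed. Once you move~(3) into the ``backward'' column for the instantiation step, all eight implications are accounted for.

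Apart from that, the paper's proof differs from yours only in two places, both minor. For the backward direction of~(2) the paper argues directly by a case split on whether $\head M=\place$; your contrapositive via the forward direction of~(1) is slicker and avoids the case analysis entirely. For the backward direction of~(4) the paper uses contradiction rather than direct induction: if $M$ were nonterminating it would admit $k{+}1$ steps, and then repeated application of the forward direction of~(1) gives $k{+}1$ steps from $M[\place:=t]$, contradicting termination in $k$ steps. Your induction is equivalent in content; the paper's version trades the explicit appeal to determinism of $\plainnext_P$ for a contradiction wrapper.
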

This theorem implies Theorem~\ref{thm:freshness}.
\begin{proof}
~
\begin{elabeling}{00.00.0000}
\item [{($\Leftarrow$1).}] Fill in $t=\place$.
\item [{($\Rightarrow$1).}] Since $\next PM$ exists, $\head M$ must
be in the domain of $P$. Because $\place\notin\dom P$, we know $\head M\ne\place$.
Let $M=n.u_{1}.\cdots.u_{k}$ and $\mbox{``}n.x_{1}.\cdots.x_{k}\rightarrow r\mbox{''}\in P$,
where $n$ is a name. Then $M[\place:=t]=n.u_{1}[\place:=t].\cdots.u_{k}[\place:=t]\rightarrow r[\vec{x}:=\vec{u}\left[\place:=t\right]]$.
Since $\place$ is not mentioned in $r$, the last term is equal to
$r[\vec{x}:=\vec{u}][\place:=t]=N[\place:=t]$.
\item [{($\Leftarrow$2).}] Assume $M[\place:=t]\fin$. Then $\head{M[\place:=t]}\notin\dom P$
or $\length{M[\place:=t]}\ne\arity{}{\head{M[\place:=t]}}$. If $\head M=\place$,
then $M\fin$, so assume $\head M\ne\place$. Then $\head M=\head{M[\place:=t]}$
and $\length M=\length{M[\place:=t]}$, so also $M\fin$.
\item [{($\Rightarrow$2,~$\Leftarrow$3,~$\Rightarrow$4).}] Fill in
$t=\place$.
\item [{($\Rightarrow$3).}] $\reaches$ is the reflexive and transitive
closure of $\rightarrow$.
\item [{($\Leftarrow$4).}] Suppose that $M[\place:=t]\reaches N\fin$
in $k$ steps. If on the contrary $M$ is not terminating, then $M\reaches M'$
in $k+1$ steps. By repeated application of ($\Rightarrow$1), also
$M[\place:=t]\reaches M'[\place:=t]$ in $k+1$ steps. Contradiction.
\qedhere 
\end{elabeling}
\end{proof}

\begin{proof}[Proof of Lemma~\ref{lem:determinism} (determinism)]

We assumed that $M\reaches m.\vec{t}\fin$ and $M\reaches n.\vec{u}\fin$.
So $m.\vec{t}$ and $n.\vec{u}$ are the term at the end of the execution
path of $M$; we see that they must be equal.
\end{proof}

\begin{proof}[Proof of Proposition~\ref{prp:converging-final-then-reaches} ($M\reaches t\backreaches N\fin$
then $M\reaches N$)]
By assumption, $M\reaches t\backreaches N$. If $N\reaches t$ in
1 or more steps, then we could not have had $N\fin$. Thus $N\reaches t$
in 0 steps: $N=t$.
\end{proof}

\subsection{Program substitution and union}

These are proofs of theorems in Section~\ref{sub:theorems-program-substitution-union}.
\begin{proof}[Proof of Theorem~\ref{thm:program-transformation}, equivalences
1/3]
Let $M=h.t_{1}.\cdots.t_{k}$, where $h$ is a name. We have the
following cases.
\begin{enumerate}
\item \emph{$h\notin\dom P$}. All names in $\dom{P\sigma}\setminus\dom P$
are fresh, so also $h\sigma\notin\dom{P\sigma}$. We see that both
$\next PM$ and $\next{P\sigma}{M\sigma}$ are undefined.
\item \emph{$h\in\dom P$}. Then some rule ``$\d{h.x_{1}.\cdots.x_{l}}r$''
is in $P$, while ``$\d{h\sigma.x_{1}.\cdots.x_{l}}{r\sigma}$''
is in $P\sigma$. If $k\ne l$, then both $\next PM$ and $\next{P\sigma}{M\sigma}$
are undefined.

If $k=l$, then $\next PM=r[\vec{x}:=\vec{t}]$, and $\next{P\sigma}{M\sigma}=r\sigma[\vec{x}:=\vec{t}\sigma]$.
We note that the domains of $\sigma$ and $[\vec{x}:=\vec{t}\sigma]$
are disjoint because names are never variables. We can therefore do
the substitutions in parallel: $r\sigma[\vec{x}:=\vec{t}\sigma]=r\left[\vec{n}:=\vec{m},\vec{x}:=\vec{t}\sigma\right]$.
Because the result of $\sigma$ is never in $\dom{\sigma}$ (all $m_{i}$
are fresh), we can even put $\left[\vec{n}:=\vec{m}\right]$ at the
end: $r\left[\vec{n}:=\vec{m},\vec{x}:=\vec{t}\sigma\right]=r\left[\vec{x}:=\vec{t}\sigma\right]\left[\vec{n}:=\vec{m}\right]=r\left[\vec{x}:=\vec{t}\sigma\right]\sigma$.
Also, because the result of $\sigma$ is never in $\dom{\sigma}$
, we know that $\sigma\sigma=\sigma$. We find that $r\left[\vec{x}:=\vec{t}\sigma\right]\sigma=r\left[\vec{x}:=\vec{t}\right]\sigma=N\sigma$.
This completes the proof. \qedhere 

\end{enumerate}
\end{proof}

\begin{proof}[Proof of Theorem~\ref{thm:program-transformation}, equivalences
2/4]
The second equivalence is by transitivity of equivalence 1. The fourth
equivalence is then trivial.
\end{proof}

\begin{proof}[Proof of Theorem~\ref{thm:program-transformation}, implications
1--4]
$\next PM$ exists iff a rule $\in P$ defines it; by $P\subseteq P'$
that rule also defines $\next{P'}M$. This proves implication 1 and
3. The second implication follows using the structure of $M\twoheadrightarrow_{P}N$.
Then the fourth implication follows trivially.
\end{proof}

\begin{proof}[Proof of Theorem~\ref{thm:program-transformation}, implication 5]
We have to show that for all $P''\supseteq P'$ and $X\in\Univ$,
$X.M\evfinal_{P''}\Longleftrightarrow X.N\evfinal_{P''}$. This follows
from $M\obseq_{P}N$ because $P\subseteq P'\subseteq P''$.
\end{proof}

\begin{proof}[Proof of Theorem~\ref{thm:program-transformation}, equivalence 5]
We show the left-implication. The right implication then follows
from $M\sigma\sigma^{-1}\obseq_{P\sigma\sigma^{-1}}N\sigma\sigma^{-1}\Leftarrow M\sigma\obseq_{P\sigma}N\sigma$,
because $\sigma\sigma^{-1}$ is the identity substitution.

So suppose $M\sigma\obseq_{P\sigma}N\sigma$, and let program $Q\supseteq P$
and term $X$ be given. We prove $X.M\evfinal_{Q}\Leftrightarrow X.N\evfinal_{Q}$
by the following chain: 
\[
\begin{array}[b]{clcl}
 & X.M\evfinal_{Q} & \qquad\\
\Leftrightarrow & X\sigma.M\sigma\evfinal_{Q\sigma} &  & \mbox{(Theorem \ref{thm:program-transformation} equivalence 2/3)}\\
\Leftrightarrow & X\sigma.N\sigma\evfinal_{Q\sigma} &  & \mbox{(\ensuremath{M\obseq_{P\sigma}N}and \ensuremath{Q\sigma\supseteq P\sigma})}\\
\Leftrightarrow & X.N\evfinal_{Q} &  & \mbox{(Theorem \ref{thm:program-transformation} equivalence 2/3)}
\end{array}\myqedhere
\]

\end{proof}

\begin{lem}
\label{lem:evfinal-program-shortening}Assume program $P'\supseteq P$
and $M\in\Univ$ such that $\dom{P'\setminus P}$ is not mentioned
in $P$ or $M$. Then $M\evfinal_{P}\Leftrightarrow M\evfinal_{P'}$.\end{lem}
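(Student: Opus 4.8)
The goal is to show that adding rules whose heads are not mentioned in $P$ or $M$ does not affect termination of $M$. The key observation is that such rules can never fire during the evaluation of $M$: the names they define do not occur in $M$, and by Proposition~\ref{prp:fresh-reducts} they cannot be introduced by any reduction step under $P$, so they never appear at the head of a term in the reduction sequence of $M$. Hence the evaluation of $M$ proceeds identically whether we use $P$ or $P'$.

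Concretely, I would proceed as follows. First, the direction $M\evfinal_{P}\Rightarrow M\evfinal_{P'}$ is immediate from Theorem~\ref{thm:program-transformation}, implication~4, since $P'\supseteq P$. For the converse, suppose $M\evfinal_{P'}$, say $M\reaches_{P'}N\fin_{P'}$. I would argue that each step $M\rightarrow_{P'}M'$ in this sequence is in fact a step $M\rightarrow_{P}M'$: since the head of the term being reduced is a name not in $\dom{P'\setminus P}$ (because, by repeated application of Proposition~\ref{prp:fresh-reducts} together with the hypothesis, none of the names in $\dom{P'\setminus P}$ are mentioned in $M$ or in any $P$-reduct of $M$, and the reduct is also a $P'$-reduct but the firing rule at each step belongs to $P$), the rule that fires is one of $P$, and it produces the same successor. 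Therefore $M\reaches_{P}N$. It then remains to check that $N\fin_{P}$: this follows from Theorem~\ref{thm:program-transformation}, implication for $\fin$ ($M\fin_{P}\Leftarrow M\fin_{P'}$), applied to $N$. Hence $M\evfinal_{P}$.

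The one subtlety to handle carefully is the induction showing that the names of $P'\setminus P$ never surface: a reduct of $M$ under $P'$ need not a priori be a reduct under $P$, so I cannot directly invoke Proposition~\ref{prp:fresh-reducts} (which is stated for a single program). The clean way is a simultaneous induction on the length of the $P'$-reduction: maintain the invariant that the current term is reachable from $M$ under $P$ and mentions no name from $\dom{P'\setminus P}$; the induction step uses that the head is such a name-free name, so the unique applicable $P'$-rule is a $P$-rule, and then Proposition~\ref{prp:fresh-reducts} applied to $P$ restores the invariant for the successor. This bookkeeping is the main (and only real) obstacle; everything else is a direct appeal to Theorem~\ref{thm:program-transformation}.
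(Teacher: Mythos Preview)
Your overall strategy---maintain the invariant that no name from $\dom{P'\setminus P}$ ever appears in a reduct of $M$, so the two successor functions agree step by step---is exactly the paper's argument. The paper carries it out symmetrically in one stroke: for any term on which the invariant holds, $\next{P}{\cdot}$ and $\next{P'}{\cdot}$ are either both undefined or both defined and equal, and the invariant is preserved; hence the whole reduction paths coincide and both directions of the equivalence fall out at once.

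There is, however, a genuine gap in your treatment of the direction $M\evfinal_{P}\Rightarrow M\evfinal_{P'}$. You claim this is immediate from Theorem~\ref{thm:program-transformation}, but that theorem contains no implication for $\evfinal$; its rows concern $\rightarrow$, $\reaches$, $\fin$, $\conv$, and $\obseq$. More importantly, the implication $M\evfinal_{P}\Rightarrow M\evfinal_{P'}$ is \emph{false} for arbitrary extensions: with $P=\emptyset$ and $P'=\{\d{A}{A}\}$ we have $A\evfinal_{P}$ yet $A$ diverges under $P'$. Concretely, from $M\reaches_{P}N\fin_{P}$ you do get $M\reaches_{P'}N$ by implication~2, but you still need $N\fin_{P'}$, and implication~3 goes the wrong way for that. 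Establishing $N\fin_{P'}$ again requires the freshness hypothesis (the head of $N$ is not in $\dom{P'\setminus P}$, so no rule of $P'\setminus P$ can apply). In short, both directions need the invariant; neither is a free consequence of program extension. Once you run your careful inductive argument for both directions---or, as the paper does, argue once that $\next{P}{\cdot}$ and $\next{P'}{\cdot}$ agree along the whole path---the proof is complete.
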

\begin{proof}
Regard $\next PM$ and $\next{P'}M$. The names in $\dom{P'\setminus P}$
are not mentioned in $M$, so either both $\next PM$ and $\next{P'}M$
are defined and equal, or both are undefined. The names in $\dom{P'\setminus P}$
are still not mentioned in $M$'s successor, so the previous sentence
applies to all reducts of $M$. We find that $M$ has a final reduct
in $P$ iff it has one in $P'$, hence $M\evfinal_{P}\Leftrightarrow M\evfinal_{P'}$.
\end{proof}

\begin{proof}[Proof of Theorem~\ref{thm:equiv-program-shortening-lengthening}]
The right-implication is already proven by Theorem~\ref{thm:program-transformation},
so we prove the left-implication.

Suppose program $P'\supseteq P$, but $\dom{P'\setminus P}$ is not
mentioned in $M,N$. Suppose furthermore program $Q\supseteq P$ and
$X\in\Univ$. Then we have to prove $X.M\evfinal_{Q}\Leftrightarrow X.N\evfinal_{Q}$.
$Q$ is not required to be a superset of $P'$; it may even define
some names differently than $P'$.

Although we know that $\Delta=\dom{P'\setminus P}$ is not used in
$M$ or $N$, any name $\in\Delta$ could be used in $X$. We want
to compare $X.M$ and $X.N$ on an extension program of $Q$, so we
will make sure that $X$ does not accidentally refer to names in $\Delta$.
We will rename all $d\in\Delta$ within $X$ and $P'$.

Take a substitution $\sigma=[d_{i}:=d'_{i}|d_{i}\in\Delta]$ that
renames all $d\in\Delta$ to fresh names for $M,N,X,P',Q$. We know
that $M=M\sigma$, $N=N\sigma$, and $P=P\sigma$, because all $d\in\Delta$
are not mentioned in $M$, $N$, or $P$. Now note that $\left(X.M\right)\sigma=X\sigma.M$
and $\left(X.N\right)\sigma=X\sigma.N$ do not contain a name in $\Delta$,
nor does any such name occur in $Q\sigma$. 

Take $Q'=P'\cup Q\sigma$. Then $Q'$ is a program because $\dom{Q\sigma\setminus P}$
has no overlap with $\dom{P'\setminus P}=\Delta$. Furthermore, $Q'$
is an extension program of both $P'$ and $Q\sigma$. We apply Lemma~\ref{lem:evfinal-program-shortening}
to see that 
\begin{equation}
\begin{array}{cc}
 & X\sigma.M\evfinal_{Q\sigma}\Leftrightarrow X\sigma.M\evfinal_{Q'}\\
\mbox{and} & X\sigma.N\evfinal_{Q\sigma}\Leftrightarrow X\sigma.N\evfinal_{Q'}\,.
\end{array}\label{eq:equivalence-program-shortening-1}
\end{equation}
 We can thus make the following series of bi-implications.

\[
\begin{array}{rcll}
X.M\evfinal_{Q} & \Leftrightarrow & X\sigma.M\evfinal_{Q\sigma} & \qquad\mbox{(Theorem \ref{thm:program-transformation})}\\
 & \Leftrightarrow & X\sigma.M\evfinal_{Q'} & \qquad\eqref{eq:equivalence-program-shortening-1}\\
 & \Leftrightarrow & X\sigma.N\evfinal_{Q'} & \qquad\mbox{(}M\obseq_{P'}N,P'\subseteq Q'\mbox{)}\\
 & \Leftrightarrow & X\sigma.N\evfinal_{Q\sigma} & \qquad\eqref{eq:equivalence-program-shortening-1}\\
 & \Leftrightarrow & X.N\evfinal_{Q} & \qquad\mbox{(Theorem \ref{thm:program-transformation})}
\end{array}
\]

Because we showed $X.M\evfinal_{Q}\Leftrightarrow X.N\evfinal_{Q}$,
we can conclude that $M\obseq_{P}N$.
\end{proof}

\subsection{Term equivalence}
\begin{prop}
$\conv$ is an equivalence relation.\end{prop}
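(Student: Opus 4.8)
The plan is to verify the three defining properties of an equivalence relation; the first two are immediate from the definition and the third carries the real content.

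For \emph{reflexivity}, observe that $M\reaches M$ in zero steps, so taking the common reduct $t:=M$ witnesses $M\conv M$. For \emph{symmetry}, the definition ``$M\conv N$ iff $\exists t:M\reaches t\backreaches N$'' is visibly symmetric in $M$ and $N$, so $M\conv N$ gives $N\conv M$ with the same witness $t$.

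The substance is \emph{transitivity}. Suppose $M\conv N$ and $N\conv Q$, so there are terms $s,t$ with $M\reaches s$, $N\reaches s$, $N\reaches t$, and $Q\reaches t$. The key point is that CC reduction is deterministic: $\rightarrow_{P}$ is a partial function (Definition~\ref{def:evaluation}), hence the reducts of any fixed term are linearly ordered by $\reaches$. Applied to $N$, from $N\reaches s$ and $N\reaches t$ we obtain $s\reaches t$ or $t\reaches s$. In the first case $M\reaches s\reaches t$ and $Q\reaches t$, so $t$ is a common reduct of $M$ and $Q$; in the second case $M\reaches s$ and $Q\reaches t\reaches s$, so $s$ is. Either way $M\conv Q$, as required.

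The only step needing care is the linear-orderedness of the reducts of a term, which I would isolate as a short sublemma and prove by induction on the length of $N\reaches s$: if the length is $0$ then $s=N\reaches t$; otherwise $N\rightarrow N'\reaches s$, and since $\rightarrow_{P}$ is a function, any reduction $N\reaches t$ with $t\neq N$ must begin with the same step $N\rightarrow N'$, giving $N'\reaches t$, after which the induction hypothesis applied to $N'$ finishes the argument. This determinism fact is essentially the one already used for Lemma~\ref{lem:determinism}, so I expect it to be the main (and only mildly nontrivial) obstacle.
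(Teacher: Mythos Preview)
Your proof is correct and follows essentially the same approach as the paper: both handle reflexivity and symmetry trivially and prove transitivity by using determinism of $\rightarrow$ to conclude that the two reducts of the middle term are comparable under $\reaches$, so one of them serves as a common reduct of the outer terms. The paper phrases the comparability argument via step counts (if $C\reaches B$ in $k$ steps and $C\reaches D$ in $l\ge k$ steps then $B\reaches D$), while you phrase it as linear orderedness of reducts proved by induction on reduction length; these are the same idea.
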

\begin{proof}
Suppose $A\conv C$ and $C\conv E$, then there exist $B,D$ such
that $A\reaches B\backreaches C\reaches D\backreaches E$. Suppose
that $C\reaches B$ in $k$ steps and $C\reaches D$ in $l$ steps.
Without loss of generality, $k\le l$. By determinism of $\rightarrow$,
\[
C\underset{k\text{ steps}}{\reaches}B\underset{l-k\text{ steps}}{\reaches}D.
\]
 Then $A\reaches B\reaches D$. We see that $D$ is a common reduct
of $A$ and $E$.
\end{proof}

\begin{prop}
\label{prp:equiv-equivalence}$\obseq$ is an equivalence relation.\end{prop}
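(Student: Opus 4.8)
The plan is to verify the three defining properties of an equivalence relation directly from the definition of $\obseq_P$, using the fact that its defining condition is a biconditional quantified uniformly over the same extension programs and test terms on both sides. Recall that $M\obseq_P N$ means: for every $P'\supseteq P$ and every $X\in\Univ$, $X.M\evfinal_{P'}\Leftrightarrow X.N\evfinal_{P'}$.

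For reflexivity, fix an arbitrary $M$; then for every $P'\supseteq P$ and $X$ the biconditional $X.M\evfinal_{P'}\Leftrightarrow X.M\evfinal_{P'}$ holds trivially, so $M\obseq_P M$. For symmetry, assume $M\obseq_P N$; for each $P'\supseteq P$ and $X$ we have $X.M\evfinal_{P'}\Leftrightarrow X.N\evfinal_{P'}$, and since a biconditional is symmetric this is the same as $X.N\evfinal_{P'}\Leftrightarrow X.M\evfinal_{P'}$, i.e.\ $N\obseq_P M$. For transitivity, assume $M\obseq_P N$ and $N\obseq_P L$; fix any $P'\supseteq P$ and any $X$, so that $X.M\evfinal_{P'}\Leftrightarrow X.N\evfinal_{P'}$ and $X.N\evfinal_{P'}\Leftrightarrow X.L\evfinal_{P'}$; chaining these two biconditionals yields $X.M\evfinal_{P'}\Leftrightarrow X.L\evfinal_{P'}$, and since $P'$ and $X$ were arbitrary, $M\obseq_P L$.

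I do not expect any genuine obstacle. In contrast to the companion proposition that $\conv$ is an equivalence relation, no appeal to determinism of $\rightarrow$ or to the program-transformation results is needed here, because the quantifier prefix $\forall P'\supseteq P\,\forall X$ is literally the same on the two sides of each biconditional; the statement is essentially the observation that the pointwise lifting of the equivalence relation ``$\Leftrightarrow$'' on propositions is again an equivalence relation. The one point to handle carefully when writing it up is to fix $P'$ and $X$ before chaining biconditionals in the transitivity case, so that the same witnesses are used throughout; the argument above is phrased to make that explicit.
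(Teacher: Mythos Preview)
Your proof is correct and matches the paper's approach exactly: the paper also notes that reflexivity and symmetry are trivial and proves transitivity by fixing $P'\supseteq P$ and $X$ and chaining the two biconditionals $X.M\evfinal_{P'}\Leftrightarrow X.N\evfinal_{P'}\Leftrightarrow X.O\evfinal_{P'}$.
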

\begin{proof}
Reflexivity and symmetry are trivial. We have to prove transitivity:
if $M\obseq_{P}N$ and $N\obseq_{P}O$, and $P\subseteq P'$, then
$X.M\evfinal_{P'}\Leftrightarrow X.O\evfinal_{P'}$. We know from
the premises that $X.M\evfinal_{P'}\Leftrightarrow X.N\evfinal_{P'}$
and $X.N\evfinal_{P'}\Leftrightarrow X.O\evfinal_{P'}$. 
\end{proof}

\begin{lem}
\label{lem:next-then-evfinal}If $X\reaches Y$, then $X\evfinal\Leftrightarrow Y\evfinal$.\end{lem}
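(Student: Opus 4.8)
The plan is to exploit the determinism of single-step reduction. Recall from Definition~\ref{def:evaluation} that $\rightarrow$ is (the graph of) a partial function $\next P{\cdot}$, and that $\reaches$ is its reflexive-transitive closure; consequently the reduction sequence issuing from any term is unique and linear. Recall also that $X\evfinal$ unfolds to $\exists N:X\reaches N\fin$.

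I would first dispatch the easy direction. Suppose $Y\evfinal$, say $Y\reaches N\fin$. Since $X\reaches Y$, transitivity of $\reaches$ gives $X\reaches Y\reaches N\fin$, hence $X\evfinal$. This uses nothing beyond transitivity of $\reaches$.

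For the converse, suppose $X\evfinal$, so $X\reaches N\fin$ in some number $k$ of steps, and also $X\reaches Y$ in some number $j$ of steps. Because $\next P{\cdot}$ is a function, the $k$-step and the $j$-step reducts of $X$ lie on one and the same linear chain; in particular one of $Y,N$ is a reduct of the other. If we had $j>k$, then $N$ would possess a successor (the next term on the chain towards $Y$), contradicting $N\fin$. Hence $j\le k$, so $Y\reaches N$, and therefore $Y\reaches N\fin$, i.e. $Y\evfinal$. (The boundary case $j=k$ forces $Y=N$, which is still final.)

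The only point requiring a little care is the determinism step: to justify that the two reducts of $X$ lie on a common chain one performs a short induction on $\min(j,k)$, at each stage using that $\next P{\cdot}$ assigns at most one successor. Everything else is routine, so I expect no real obstacle — the lemma is essentially the observation that the reduction graph of a term is a path.
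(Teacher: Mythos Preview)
Your proposal is correct and follows essentially the same idea as the paper: both directions use transitivity for $Y\evfinal\Rightarrow X\evfinal$, and for the converse both exploit that $\rightarrow$ is a partial function so that the reduction path from $X$ is linear. The only cosmetic difference is that the paper organizes the hard direction as an induction on the length of $X\reaches Y$ (peeling off one step and applying the hypothesis to $X'\reaches Y$), whereas you compare the step-counts $j$ and $k$ directly; both arguments rest on the same determinism observation.
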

\begin{proof}
By induction on the number of steps $s$ in $X\reaches Y$. If $X=Y$,
then trivial, so assume $s\ge1$. This implies the existence of term
$X'$ such that $X\rightarrow X'$.

If there exists $Z$ such that $Y\reaches Z\fin$, then $X\reaches Y\reaches Z\fin$.
Reversely, assume $X\reaches Z\fin$ for some $Z$. Because $X\ne Y$
and by determinism of $\rightarrow$ we know $X\rightarrow X'\reaches Z\fin$
and $X\rightarrow X'\reaches Y$. By induction on $X'\reaches Y$
we get $Y\reaches Z\fin$.
\end{proof}

\begin{proof}[Proof of Proposition~\ref{prp:equiv-then-same-termination} ($M\obseq N$
then $M\evfinal\Leftrightarrow N\evfinal$)]
Take a fresh name $X$, and define $P'=P\cup\left\{ \d{X.t}t\right\} $.
\[
\begin{array}{ccll}
M\evfinal_{P} & \Leftrightarrow & M\evfinal_{P'} & \qquad\mbox{(evaluation of }M\mbox{ never contains a head in \ensuremath{\dom{P'\setminus P}})}\\
 & \Leftrightarrow & X.M\evfinal_{P'} & \qquad\mbox{(}X.M\rightarrow_{P'}M\mbox{, }\rightarrow\mbox{ deterministic)}\\
 & \Leftrightarrow & X.N\evfinal_{P'} & \qquad\mbox{(}M\obseq_{P}N\mbox{)}\\
 & \Leftrightarrow & N\evfinal_{P'} & \qquad\mbox{(}X.N\rightarrow_{P'}N\mbox{, }\rightarrow\mbox{ deterministic)}\\
 & \Leftrightarrow & N\evfinal_{P} & \qquad\mbox{(evaluation of }N\mbox{ never contains a head in \ensuremath{\dom{P'\setminus P}})}
\end{array}
\]

\end{proof}

\begin{lem}
\label{lem:dot-of-reducible-not-reducible}If $X\rightarrow Y$, then
$X.\vec{t}\fin$ for $k>0$.\end{lem}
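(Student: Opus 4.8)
The plan is to read off everything from the definition of $\next{P}{\cdot}$ together with the inductive definitions of $\head{\cdot}$ and $\length{\cdot}$. Since $X\rightarrow Y$, the partial function $\next{P}{\cdot}$ is defined on $X$, which by Definition~\ref{def:evaluation} means that $P$ defines $\head X$ and that $\length X$ equals the length of the corresponding left-hand side. Writing $X = n.t_1.\cdots.t_m$ with $n=\head X\in\dom P$, the defining rule has the shape $\d{n.x_1.\cdots.x_m}{r}$, so $n$ has arity exactly $m$ and $\length X = m$.

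Next I would append the terms. Let $\vec t = t'_1.\cdots.t'_k$ with $k>0$, and consider $X.\vec t = n.t_1.\cdots.t_m.t'_1.\cdots.t'_k$. By the inductive clauses $\head{a.b}=\head a$ and $\length{a.b}=1+\length a$, we get $\head{X.\vec t} = n$ and $\length{X.\vec t} = m+k$. Now apply the definedness criterion for $\next{P}{\cdot}$: it would require $\length{X.\vec t}$ to equal the length of the left-hand side defining $n$, i.e.\ $m+k = m$, which is false since $k>0$. Hence $\next{P}{X.\vec t}$ is undefined, that is, $X.\vec t\fin$. (This is exactly the ``complete term followed by a dot is invalid, hence final'' phenomenon already noted informally in Section~\ref{sec:term-categorizations}.)

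I do not expect any real obstacle here: the only thing to be careful about is the routine bookkeeping that $\head{\cdot}$ ignores, and $\length{\cdot}$ adds one per, every appended dot, so that $X.\vec t$ has the same head as $X$ but strictly larger length; once that is in place, the mismatch with the arity of $n$ is immediate and the conclusion $X.\vec t\fin$ follows directly from the definition of a final term.
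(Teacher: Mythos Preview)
Your proof is correct and follows essentially the same approach as the paper: both argue that $X$ and $X.\vec t$ share the same head (hence the same defining rule), while $\length{X.\vec t}=\length X+k$ no longer matches the arity, so $\next{}{X.\vec t}$ is undefined. You simply spell out the bookkeeping in more detail than the paper's two-line proof.
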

\begin{proof}
$\next{}M$ exists iff the length of the corresponding left-hand side
is equal to $\length M$, and $\length{X.t_{1}.\cdots.t_{k}}=\length X+k$.
The corresponding left-hand side is the same for $X$ and $X.\vec{t}$.
\end{proof}

\begin{proof}[Proof of Lemma~\ref{lem:equiv-congruence} ($\obseq$ is a congruence)]
Let $P'\supseteq P$ be an extension program. We must prove that
for all $X$, $X.(M.N)\evfinal_{P'}\Leftrightarrow X.(M'.N')\evfinal_{P'}$.
Extend $P'$ to $P''=P'\cup\{\d{A.m.n}{X.(m.n)},\d{B.n.m}{X.(m.n)}\}$.
Note that by Lemma~\ref{lem:next-then-evfinal}, 
\[
\begin{array}{cccccc}
 & X.\left(M.N\right)\evfinal_{P''} & \Leftrightarrow & A.M.N\evfinal_{P''} & \Leftrightarrow & B.N.M\evfinal_{P''}\\
\mbox{and } & X.\left(M'.N'\right)\evfinal_{P''} & \Leftrightarrow & A.M'.N'\evfinal_{P''} & \Leftrightarrow & B.N'.M'\evfinal_{P''}\mbox{,}
\end{array}
\]
so we can make the following chain: 
\[
\begin{array}{clcl}
 & X.\left(M.N\right)\evfinal_{P''} & \quad\\
\Leftrightarrow & A.M.N\evfinal_{P''}\\
\Leftrightarrow & A.M.N'\evfinal_{P''} &  & \mbox{(}N\obseq N'\mbox{)}\\
\Leftrightarrow & B.N'.M\evfinal_{P''}\\
\Leftrightarrow & B.N'.M'\evfinal_{P''} &  & \mbox{(}M\obseq M'\mbox{)}\\
\Leftrightarrow & X.\left(M'.N'\right)\evfinal_{P''}\,.
\end{array}
\]
Now by Lemma~\ref{lem:evfinal-program-shortening}, $X.\left(M.N\right)\evfinal_{P'}\Leftrightarrow X.\left(M'.N'\right)\evfinal_{P'}$,
which was to be shown.
\end{proof}

\begin{lem}
\label{lem:common-ext-reduct-then-evfinal}Let $M,N\in\Univ$ and
$k\ge0$. Let names $\place_{1},\cdots,\place_{k}$ be not mentioned
in $M,N,P$. Suppose $M.\place_{1}.\cdots.\place_{k}\rightarrow M'\reaches t\backreaches N'\leftarrow N.\place_{1}.\cdots.\place_{k}$.
Let name $\otherplace$ be not mentioned in $M,N,P$. Then $\forall X\in\Univ:X[\otherplace:=M]\evfinal\Leftrightarrow X[\otherplace:=N]\evfinal$.\end{lem}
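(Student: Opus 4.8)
The plan is a direct lock-step simulation of the (deterministic) reductions of $X[\otherplace:=M]$ and $X[\otherplace:=N]$. Write $\rho_{M}=[\otherplace:=M]$ and $\rho_{N}=[\otherplace:=N]$, and note that $\otherplace$ occurs in none of $M$, $N$, $P$ (though possibly in $X$). The hypothesis $M.\place_{1}.\cdots.\place_{k}\rightarrow M'\reaches t\backreaches N'\leftarrow N.\place_{1}.\cdots.\place_{k}$ is symmetric in $(M,M')$ and $(N,N')$, so it suffices to prove $X\rho_{M}\evfinal\Rightarrow X\rho_{N}\evfinal$; assuming $X\rho_{M}\reaches F\fin$ in $s$ steps, I would proceed by induction on $s$. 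By Theorem~\ref{thm:freshness} one may additionally take the $\place_{i}$ fresh for $X$, removing a minor substitution annoyance.

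The core is a case split on $\head X$. If $\head X\ne\otherplace$, then $X\rho_{M}$ and $X\rho_{N}$ have equal heads and equal lengths, so either both are final, or both make one step via the (unique) rule of $P$ for $\head X$, landing in $Z\rho_{M}$ and $Z\rho_{N}$ where $Z$ is that rule's right-hand side with the operands of $X$ substituted for its variables; the substitution commutes with $\rho_{M},\rho_{N}$ because a right-hand side in $P$ does not mention $\otherplace$. If instead $X=\otherplace.q_{1}.\cdots.q_{j}$, then $M$ has arity $k$ in $P$ — since $M.\place_{1}.\cdots.\place_{k}$ reducing forces $\arity{}{\head M}=\length M+k$ — and likewise $N$; hence for $j<k$ both $X\rho_{M},X\rho_{N}$ are incomplete and for $j>k$ both are invalid, so in either case both are final. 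In the remaining subcase $j=k$, the rule for $\head M$ mentions no $\place_{i}$, so $M.\place_{1}.\cdots.\place_{k}\rightarrow M'\reaches t$ lifts (by Theorem~\ref{thm:freshness}, or directly from that rule) to $X\rho_{M}\rightarrow M'[\place_{1}:=q_{1}\rho_{M},\ldots,\place_{k}:=q_{k}\rho_{M}]\reaches t[\place_{1}:=q_{1}\rho_{M},\ldots,\place_{k}:=q_{k}\rho_{M}]$, and symmetrically $X\rho_{N}\reaches t[\place_{1}:=q_{1}\rho_{N},\ldots,\place_{k}:=q_{k}\rho_{N}]$; since $\otherplace$ does not occur in the reduct $t$ of $M.\place_{1}.\cdots.\place_{k}$ (Proposition~\ref{prp:fresh-reducts}), these reducts are exactly $Z\rho_{M}$ and $Z\rho_{N}$ for $Z=t[\place_{1}:=q_{1},\ldots,\place_{k}:=q_{k}]$, and here $X\rho_{M}$ and $X\rho_{N}$ each take at least one step.

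Given this dichotomy the induction closes: either $X\rho_{M}$ is final, hence so is $X\rho_{N}$, so $X\rho_{N}\evfinal$; or $X\rho_{M}\reaches Z\rho_{M}$ in some $p\ge1$ steps, so by determinism of $\rightarrow$ the reduction $X\rho_{M}\reaches F$ passes through $Z\rho_{M}$, leaving $s-p<s$ steps to $F$, so the induction hypothesis applied to $Z$ gives $Z\rho_{N}\evfinal$, whence $X\rho_{N}\reaches Z\rho_{N}$ and Lemma~\ref{lem:next-then-evfinal} give $X\rho_{N}\evfinal$. I expect the one genuinely delicate point to be the substitution bookkeeping in the $j=k$ subcase: verifying that reducing $M.(q_{1}\rho_{M}).\cdots.(q_{k}\rho_{M})$ yields $M'$ and then $t$ with each $\place_{i}$ replaced by $q_{i}\rho_{M}$ — which uses that the $\place_{i}$ occur neither in $M$ nor in the rule for $\head M$ — and that this replacement then factors as the single substitution $[\otherplace:=M]$ since $\otherplace\notin t$. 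A slicker but non-self-contained alternative would be to set $P^{+}=P\cup\{\otherplace.x_{1}.\cdots.x_{k}\rightarrow t[\place_{1}:=x_{1},\ldots,\place_{k}:=x_{k}]\}$, observe $M.a_{1}.\cdots.a_{k}\otherconv{P^{+}}\otherplace.a_{1}.\cdots.a_{k}\otherconv{P^{+}}N.a_{1}.\cdots.a_{k}$ for all $\vec{a}$, deduce $M\obseq_{P^{+}}N$ from Theorem~\ref{thm:converging-ext-then-equiv}, propagate it through $X$ by congruence (Lemma~\ref{lem:equiv-congruence}), and strip $\otherplace$ with Lemma~\ref{lem:evfinal-program-shortening}; but Theorem~\ref{thm:converging-ext-then-equiv} is presumably proved afterwards using the present lemma, so the direct argument is preferable.
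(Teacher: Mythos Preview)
Your proposal is correct and follows essentially the same approach as the paper's proof: induction on the length of the terminating reduction, a case split on whether $\head X=\otherplace$ and on the length/arity, the same substitution bookkeeping in the critical $j=k$ subcase (factoring through the common reduct $t$ and using that $\otherplace$ does not occur in $t$), and the same appeal to determinism to feed the induction hypothesis. Your organization collapses the paper's four cases into two (head equal to $\otherplace$ or not), and you additionally arrange for the $\place_{i}$ to be fresh for $X$, which slightly streamlines the substitution argument; apart from these cosmetic differences the proofs coincide.
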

\begin{proof}
Suppose that $X[\otherplace:=M]\reaches X'\fin$ in $n$ steps. We
will show that $X[\otherplace:=N]\evfinal$. The other direction holds
by symmetry. The proof goes by induction on $n$.

Because $\next{}{M.\place_{1}.\cdots.\place_{k}}$ and $\next{}{N.\place_{1}.\cdots.\place_{k}}$
exist, we know that $\arity{}M=\arity{}N=k$. Regard $\head X$. We
distinguish four cases:
\begin{caseenv}
\item \emph{$\head X=\otherplace$, and $\length X\ne k$}. Then $\arity{}{X[\otherplace:=N]}$
is undefined or not zero, hence $X[\otherplace:=N]\fin$.
\item \emph{$\head X=\otherplace$, and $\length X=k$}. Then there exist
$u_{1},\cdots,u_{k}$ such that $X=\otherplace.u_{1}.\cdots.u_{k}$.
Then: 
\[
\begin{aligned}X[\otherplace:=M] & =M.u_{1}[\otherplace:=M].\cdots.u_{l}[\otherplace:=M] & \quad\\
 & \rightarrow M'[\vecplace:=\vec{u}[\otherplace:=M]] &  & \mbox{(}\vecplace\mbox{ fresh for }M,P\mbox{)}\\
 & =M'[\vecplace:=\vec{u}][\otherplace:=M] &  & \mbox{(}\mbox{\ensuremath{\otherplace}\ fresh for }M'\mbox{)}\\
 & \reaches t[\vecplace:=\vec{u}][\otherplace:=M] &  & \mbox{(}\vecplace\mbox{ fresh for }M,P\mbox{)}
\end{aligned}
\]
Analogously, $X[\otherplace:=N]\reaches t[\vecplace:=\vec{u}][\otherplace:=N]$.
We know $t[\vecplace:=\vec{u}][\otherplace:=M]\reaches X'\fin$ in
at most $n-1$ steps, and $\otherplace$ is not mentioned in $t[\vecplace:=\vec{u}]$,
so using the induction hypothesis we get $t[\vecplace:=\vec{u}][\otherplace:=N]\evfinal$.
Hence, $X[\otherplace:=N]\evfinal$.
\item \emph{$\head X\ne\otherplace$, and }$\arity{}X\ne0$\emph{ or undefined}.
Then $\arity{}{X[\otherplace:=M]}=\arity{}{X[\otherplace:=N]}=\arity{}X\ne0$
or undefined, hence $X[\otherplace:=M]\fin$ and $X[\otherplace:=N]\fin$.
\item \emph{$\head X\ne\otherplace$, and $\arity{}X=0$}. Then $\next{}{X[\otherplace:=M]}=\next{}X[\otherplace:=M]$
and $\next{}{X[\otherplace:=N]}=\next{}X[\otherplace:=N]$. We assumed
$X[\otherplace:=M]\reaches X'\fin$ in $n$ steps, so $\next{}X[\otherplace:=M]\reaches X'\fin$
in at most $n-1$ steps. We can therefore apply the induction hypothesis
to find $\next{}X[\otherplace:=N]\evfinal$. \qedhere 
\end{caseenv}
\end{proof}

\begin{proof}[Proof of Theorem~\ref{thm:converging-ext-then-equiv}]
Suppose $P'\supseteq P$ is an extension program. We must prove $X.M\evfinal\Leftrightarrow X.N\evfinal$.

Take $\vec{\place}$ fresh for $P',X,M,N$. Because $\arity{}{M.\place_{1}.\cdots.\place_{k}}=\arity{}{N.\place_{1}.\cdots.\place_{k}}=0$,
they both have a successor, say $M'$ and $N'$. By definition of
$\conv$ and determinism of $\rightarrow$, we know $M.\place_{1}.\cdots.\place_{k}\rightarrow M'\reaches t\backreaches N'\leftarrow N.\place_{1}.\cdots.\place_{k}$.
Then by Lemma~\ref{lem:common-ext-reduct-then-evfinal}, we know
$X.M\evfinal_{P'}\Leftrightarrow X.N\evfinal_{P'}$.
\end{proof}

\begin{proof}[Proof of Theorem~\ref{thm:obseq-then-reaches-same-name} ($M\reaches\place.t_{1}.\cdots.t_{k}$,
$M\obseq N$ then $N\reaches\place.u_{1}.\cdots.u_{k}$)]

Because $\place\notin\dom P$, we know $M\reaches\place\fin$. By
Proposition~\ref{prp:equiv-then-same-termination}, $N\reaches N'\fin$.

Suppose on the contrary that $\head{N'}\ne\place$ or $\length{N'}\ne k$.
We will deduce an impossibility. Make an extension program $P'=P\cup\{\d{\place.x_{1}.\cdots.x_{k}}{\place.x_{1}.\cdots.x_{k}}\}$.
Then $M\reaches\place.t_{1}.\cdots.t_{k}$ is nonterminating under
$P'$. But by definition of $\plainnext$, $N'\fin_{P'}$. This contradicts
with Proposition~\ref{prp:equiv-then-same-termination} and Theorem~\ref{thm:equiv-program-shortening-lengthening},
which prove that $N\reaches N'$ is nonterminating. Hence we conclude
$N'=\place.u_{1}.\cdots.u_{k}$ for some terms $\vec{u}$.
\end{proof}

\begin{proof}[Proof of Proposition~\ref{prp:all-nats-eqv-succs-zero}]
Let $\denot{\mathbb{N}_{0}}=\{M\in\Univ|\forall z,s\in\Univ:M.z.s\reaches z\}$
and $\denot{\mathbb{N}_{k+1}}=\{M\in\Univ|\exists x\in\denot{\mathbb{N}_{k}}\;\forall z,s\in\Univ:M.z.s\reaches s.x\}$.
Then every $\denot{\mathbb{N}_{k}}\subseteq\denot{\mathbb{N}}$, and
$\cup_{k\in\mathbb{N}}\denot{\mathbb{N}_{k}}$ satisfies the defining
equation of $\denot{\mathbb{N}}$, so $\denot{\mathbb{N}}=\cup_{k\in\mathbb{N}}\denot{\mathbb{N}_{k}}$.

Suppose $M\in\denot{\mathbb{N}}$, then $M\in\mbox{some }\denot{\mathbb{N}_{k}}$.
We proceed by induction on $k$. If $k=0$, then Theorem~\ref{thm:converging-ext-then-equiv}
shows $M\obseq\sembed 0$. If $k\ge1$, then by the induction hypothesis
there is some $x\in\denot{\mathbb{N}_{k-1}}$ such that for all $z,s$,
$M.z.s\reaches s.x$. Observe that $M.z.s\conv S.x.z.s$ for all $z,s$,
so Theorem~\ref{thm:converging-ext-then-equiv} shows us $M\obseq S.x$.
We get $S.x\obseq S.\sembed{k-1}=\sembed k$ by the induction hypothesis
and Lemma~\ref{lem:equiv-congruence}, from which we get the result.\end{proof}

\end{document}